\title{Towards Stronger Counterexamples to the Log-Approximate-Rank Conjecture}
\author{Arkadev Chattopadhyay \footnote{Tata Institute for Fundamental Research, Mumbai. email: arkadev.c@tifr.res.in} \and %
Ankit Garg \footnote{Microsoft Research India, Bengaluru. email: garga@microsoft.com} \and %
Suhail Sherif \footnote{Tata Institute for Fundamental Research, Mumbai. This work was mainly done while the author was at Microsoft Research India, Bengaluru. email: suhail.sherif@gmail.com}
}
\date{}
\newtheorem{theorem}{Theorem}[section]
\newtheorem{lemma}[theorem]{Lemma}
\newtheorem{corollary}[theorem]{Corollary}
\newtheorem{claim}[theorem]{Claim}
\newtheorem{remark}[theorem]{Remark}
\newtheorem{fact}[theorem]{Fact}
\newtheorem{conjecture}[theorem]{Conjecture}
\newtheorem{definition}[theorem]{Definition}
\newcommand{\zone}{\{0,1\}}
\newcommand{\pmone}{\{-1,1\}}
\newcommand{\mV}{\mathcal{V}}
\newcommand{\mU}{\mathcal{U}}
\newcommand{\R}{\mathbb{R}}
\newcommand{\E}{\mathbb{E}}
\newcommand{\F}{\mathbb{F}}
\newcommand{\half}{\sfrac{1}{2}}
\newcommand{\bra}[1]{\{#1\}}
\newcommand{\spa}{\mathrm{span}}
\newcommand{\dimension}{\mathsf{dim}}
\newcommand{\codimension}{\mathsf{codim}}
\newcommand{\cosetmap}{\mathsf{coset}}
\newcommand{\defeq}{\mathrel{\mathop:}=}
\newcommand{\XOR}{\mathsf{XOR}}
\newcommand{\sparsity}[1]{\|\hat{#1}\|_{0}}
\newcommand{\asparsity}[2]{\|\hat{#1}\|_{0,#2}}
\newcommand{\spectral}[1]{\left\|\widehat{#1}\right\|_{1}}
\newcommand{\aspectral}[2]{\left\|\widehat{#1}\right\|_{1,#2}}
\newcommand{\RPDT}[1]{\mathsf{R^{\oplus}_{#1}}}
\newcommand{\dotp}[2]{\langle #1, #2 \rangle}
\newcommand{\SINK}{\mathsf{SINK}}
\newcommand{\A}{h}
\begin{document}

\maketitle

\begin{abstract}
    We give improved separations for the query complexity analogue of the log-approximate-rank conjecture i.e. we show that there are a plethora of total Boolean functions on $n$ input bits, each of which has approximate Fourier sparsity at most $O(n^3)$ and randomized parity decision tree complexity $\Theta(n)$. This improves upon the recent work of Chattopadhyay, Mande and Sherif \cite{CMS20} both qualitatively (in terms of designing a large number of examples) and quantitatively (improving the gap from quartic to cubic). We leave open the problem of proving a randomized communication complexity lower bound for XOR compositions of our examples. A linear lower bound would lead to new and improved refutations of the log-approximate-rank conjecture. Moreover, if any of these compositions had even a sub-linear cost randomized communication protocol, it would demonstrate that randomized parity decision tree complexity does not lift to randomized communication complexity in general (with the XOR gadget). 
\end{abstract}

\section{Introduction}
The Log-Rank Conjecture (LRC) of Lovasz and Saks asserts that two very seemingly different quantities, one the deterministic communication complexity of a total function $f$ (denoted by $D(f)$) and the other the log of the  rank of its communication matrix (denoted by $(M_f)$) over the field of reals, are essentially the same, i.e. within a fixed polynomial of each other. While this thirty year old conjecture remains wide open, it's natural to try upper-bounding the communication complexity of $f$ by \emph{some} function of the rank of $M_f$. The best such known bound was obtained by Lovett~\cite{Lov16}, rather recently, which showed that $D(f)$ is at most the square-root of the rank of $M_f$, ignoring log factors.

A tempting analog of the LRC for randomized communication complexity appears in a book by Lee and Shraibman~\cite{LS09} where it was named as the Log-Approximate-Rank Conjecture (LARC). Informally, this is LRC with deterministic communication complexity replaced by bounded-error randomized complexity of $f$, and rank replaced by the \emph{approximate rank} of $M_f$, where the approximation is uniform point-wise. The LARC is important for several reasons. First, it implies the LRC itself~\cite{GL14}. Second, it implies several other central conjectures, like the polynomial equivalence of quantum and classical communication complexity of total functions ~\cite{BdW01}. Third, every known lower bound, until very recently, was no larger than a small polynomial of the log of the approximate rank. Very recently, Chattopadhyay, Mande and Sherif \cite{CMS20} provided a surprisingly simple counterexample to the LARC, that exponentially separated randomized communication complexity from the log of the approximate rank. In particular, their function $f$ has Alice and Bob holding $n$ bits each, the approximate rank of its $2^n \times 2^n$ communication matrix  $M_f$ is merely $O(n^2)$ and yet the randomized communication complexity is $\Theta(\sqrt{n})$. 

Some questions immediately arise from the above refutation of the LARC. First, is the refutation optimal? There are two ways to measure optimality. The approximate rank and communication complexity are separated by a 4th power. Is this separation true for all functions i.e. is randomized communication complexity always upper bounded by fourth-root of the approximate rank? Interestingly, G{\'a}l and Syed \cite{GS19} recently showed that quantum communication complexity is upper bounded by at most square-root of the approximate rank but for randomized communication, the best upper bound is still linear in the approximate rank. The second way to view optimality is the extent of the gap achieved between log of the approximate rank and communication complexity. This is $O(\log n)$ vs. $\sqrt{n}$ for the current refutation. Can this gap be widened via other functions? This leads us to, of course, the related problem of finding other counter-examples to LARC. 
Finding a richer set of counter-examples, besides being interesting in their own right, could prove useful for understanding other central conjectures. A concrete example is the question of relative power of quantum and classical protocols to solve total functions, a major open problem. If we have to find a total function with an exponential gap between the quantum communication and randomized communication complexities (if one exists at all), then the function should also have an exponential separation between log of approximate rank and randomized communication complexity.\footnote{Since log of the approximate rank lower bounds quantum communication as well.} However, it was shown by Anshu et al. \cite{ABT19} and Sinha and de Wolf \cite{SdW19} that the function of \cite{CMS20} has large quantum communication complexity (hence refuting the quantum version of LARC as well). This motivates the search for other examples refuting the LARC.


In this work, we come up with a rich set of functions that leaves us with the following win-win situation: either every one of these functions gives a stronger refutation of the LARC than what is known or there is no \emph{lifting theorem} for randomized communication complexity of XOR functions. Lifting theorems, in the setting of communication complexity, lift the complexity of a function $f$ in an appropriate query model to the communication complexity of a problem crafted out of $f$ naturally by block composition with a gadget $g$, denoted by $f \circ g$. Starting with the celebrated work of Raz and McKenzie \cite{RM97}, they have enabled major progress recently in communication complexity and adjoining areas \cite{GPW18,dRMNPRV19,GJW18,CKLM18}. In all these theorems, the size of the gadget $g$ is at least logarithmic in the input length of the query function $f$. A challenging open problem is to prove lifting theorems for a constant size gadget. A natural one is the one bit\footnote{the gadget size here means the number of bits held by each of the two players.} XOR gadget. It is not hard to verify that a (randomized) parity decision tree (R)PDT algorithm for $f$ of cost $c$ readily translates into a communication protocol of cost $2c$ for $f \circ \text{XOR}$. A lifting theorem for XOR functions would assert the converse. In other words, a communication protocol cannot be more efficient than naively simulating the optimal RPDT. The strongest evidence for such an assertion is the result of Hatami, Hosseini and Lovett \cite{HHL16} who show that if $f$ has deterministic PDT cost $c$, then $f \circ \text{XOR}$ has deterministic communication complexity $c^{\Omega(1)}$. While no general result exists for the randomized model, the community believes it to be plausible. We state our main result informally.

\begin{theorem}[Informal]
Assuming XOR lifting theorems for randomized communication complexity, there exists a rich class of functions $f:\{0,1\}^n \to \{0,1\}$, such that $M_{f \circ \text{XOR}}$ has approximate rank $O(n^3)$ and $R(f\circ \text{XOR}) = \Theta(n)$.
\end{theorem}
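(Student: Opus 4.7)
The proof will construct a rich class $\mathcal{F}_n$ of total Boolean functions $f:\zone^n\to\zone$ satisfying two properties: (i) $\asparsity{f}{1/3} = O(n^3)$ and (ii) $\RPDT{1/3}(f) = \Theta(n)$. Given these, the theorem follows by two standard reductions. For the rank bound, replacing $f$ entry-wise by its sparse $1/3$-approximating Fourier polynomial and observing that each character contributes rank one to $M_{f\circ\XOR}$ yields $\rank_{1/3}(M_{f\circ\XOR})\le\asparsity{f}{1/3}=O(n^3)$. For the randomized-communication side, the assumed XOR lifting theorem turns (ii) into $R(f\circ\XOR)=\Omega(n)$, and the matching $O(n)$ upper bound is the trivial two-bits-per-query simulation of any optimal RPDT.

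For the sparsity step I would lift the $O(m^2)$ calculation of \cite{CMS20} for $\SINK$ on $\binom{m}{2}$ bits to a broader family, paying at most an $O(n)$ factor. Recall that $\SINK$ is an $\mathrm{OR}$ over $m$ pairwise-disjoint subcube indicators (one per potential sink vertex), and \cite{CMS20} bounds its approximate sparsity by combining the per-indicator sparsity with a cheap sparse polynomial approximation of $\mathrm{OR}$. The class $\mathcal{F}_n$ would enlarge this by allowing each subcube indicator to be replaced by a more general sparsity-$O(n)$ Boolean function, e.g.\ $\SINK$ composed with a small $\SUB$ gadget or with arbitrary coordinate permutations, while preserving the top-level $\mathrm{OR}$ structure so that the polynomial-approximation step still goes through. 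The bookkeeping task is to verify that the extra combinatorial freedom costs only one additional factor of $n$, so the final bound lands at $O(n^3)$.

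For the RPDT lower bound I would use a Yao/minimax argument tailored to parity queries. Fix a hard distribution $\mu$ on $\zone^n$ such that (a) $\mu(f^{-1}(0))$ and $\mu(f^{-1}(1))$ are both bounded away from $0$, and (b) every parity $\chi_S$ with $|S|\le cn$ is nearly balanced conditional on each of $f^{-1}(0)$ and $f^{-1}(1)$. Property (b) forces every deterministic PDT of depth $<cn$ to err on $\mu$ with probability bounded away from $0$, which by Yao's principle gives $\RPDT{1/3}(f)=\Omega(n)$. To arrange (b), one would engineer each $f\in\mathcal{F}_n$ so that it embeds $\Theta(\sqrt n)$ pairwise-disjoint hard $\SINK$-like subinstances, chosen so that a single low-weight parity can meaningfully touch only $o(\sqrt n)$ of them.

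The main obstacle is exactly step (b). Parity queries freely combine information across disjoint subproblems, so naive direct-sum reasoning fails: the $\Omega(\sqrt n)$ sensitivity-type bound of \cite{CMS20} for a single $\SINK$ subinstance does not automatically compose to $\Omega(n)$ for many disjoint copies under an RPDT. Overcoming this will require either enforcing disjointness at the Fourier level, so that low-weight characters are useless for every subproblem, or developing a spectral hardness measure for $\RPDT{1/3}(\cdot)$ (in the spirit of $\aspectral{f}{1/3}$-type quantities) that is robust under our composition. Either route is where the technical work will concentrate, and it is precisely this step whose success or failure will decide whether the construction strongly refutes LARC or instead, as the abstract anticipates, exposes the absence of a randomized XOR lifting theorem.
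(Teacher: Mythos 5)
The high-level framing of your proposal is correct: you want $f$ with $\asparsity{f}{1/3}=O(n^3)$ and $\RPDT{1/3}(f)=\Omega(n)$, then combine with the lifting assumption and the trivial $O(n)$ simulation. But your proposed construction is genuinely different from the paper's and does not close the gap you yourself identify; in fact, that gap is exactly where the paper's central new idea lives, and your proposal currently has nothing in its place.

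The paper departs from \cite{CMS20} much more sharply than your ``$\SINK$ with small gadgets or permutations'' route suggests. Rather than unioning disjoint \emph{subcubes}, the paper takes the $1$-set of $f$ to be a union of $m=\Theta(n)$ pairwise trivially-intersecting linear \emph{subspaces} $V_1,\dots,V_m$ of dimension $2n/5$. Trivial pairwise intersection gives the clean identity $f=\sum_i \mathds{1}_{V_i}-(m-1)\mathds{1}_{\{0\}}$, hence $\spectral{f}\le 2m=O(n)$, and Grolmusz's theorem immediately gives $\asparsity{f}{\epsilon}=O(m^2n)=O(n^3)$ --- no ad hoc $O(n)$ bookkeeping factor and no reliance on an OR-approximation step as in $\SINK$. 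For the lower bound, the paper does not attempt a direct-sum or sensitivity argument across disjoint subinstances (the very thing you flag as the obstacle); instead it uses a corruption bound together with the key structural property that $\{V_1,\dots,V_m\}$ forms a \emph{dual subspace design}: every affine subspace $W$ of codimension at most $s=\Theta(n)$ intersects all but $O(m)$ of the $V_i$. Under the natural hard distribution (half mass uniform on $f^{-1}(0)$, half spread uniformly over a random $V_i$), any leaf of a shallow PDT is an affine subspace of low codimension, and the hitting property forces it to pick up $\Omega(1)$-fraction mass from $f^{-1}(1)$ --- a contradiction with the corruption bound. Random subspaces satisfy both required properties (pairwise trivial intersection, and the dual subspace design property) with high probability, which is how the paper gets a rich class.

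So the concrete gap in your proposal is this: you correctly observe that parity queries cut across disjoint $\SINK$-like subinstances so that no direct-sum argument applies, but you offer only a wish (``enforce disjointness at the Fourier level'' or ``develop a spectral hardness measure'') rather than a mechanism. The paper's mechanism is the passage from subcubes to subspaces plus the subspace-design hitting property, and there is no obvious analogue of that property for a family of disjoint subcubes or $\SINK$ copies --- indeed a single parity can simultaneously restrict all your $\Theta(\sqrt n)$ subinstances, which is precisely why $\SINK$ itself only gives an $\Omega(\sqrt n)$ RPDT bound. Without replacing your building blocks by subspaces (or some other structure with a comparable robust hitting property against low-codimension affine subspaces), step (b) of your outline will not go through to $\Omega(n)$.
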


Thus, conditionally, we get the following improvements over the results in \cite{CMS20}: (1) We narrow the gap between approximate rank and randomized communication complexity from quartic to cubic.  (2) We expand the gap between log-approximate-rank and randomized complexity from $O(\log n)$ vs. $\sqrt{n}$ to $O(\log n)$ vs. $n$, thus yielding essentially the strongest possible refutation of the LARC, under plausible assumptions. While this is a nice conceptual way to view our results, it seems proving communication lower bounds for these functions will require new tools and techniques. On the other hand, coming up with non-trivial communication protocol for any of these functions will rule out a PDT to communication lifting theorem for XOR functions in the randomized model.


\subsection{Main Ideas}
The starting point of our work is to pursue the idea in \cite{CMS20} of looking for functions with small (approximate) spectral norm, i.e. functions whose sum of the magnitude of Fourier coefficients is a small polynomial in $n$. 
The previous counterexample to the LARC used the concept of disjoint subcubes to achieve this as every subcube has spectral norm one.  This implied that a function $f$ whose set of ones form a union of polynomially many disjoint subcubes will have polynomial spectral norm. The fact that polynomial spectral norm implies polynomial Fourier sparsity, yields that the approximate rank of every such $f$ lifted by XOR is guaranteed to be small. The randomized communication complexity of one such function, $\SINK \circ \XOR$, was shown to be large via a Corruption Bound, the proof of which utilized Shearer's Lemma. The randomized parity decision tree lower bound used a robust subspace-hitting property of the subcubes instead.

In this work, we study a broader class of functions based on disjoint subspaces. The approximate rank of their lifts by XOR is again guaranteed to be small. The main conceptual contribution of our work is to identify a property that is sufficient for every such union of subspaces to have large RPDT complexity. 
Remarkably, this property is quite well encapsulated in the concept of Subspace Designs, a notion that has been studied in the literature in the context of error correcting codes and pseudorandomness \cite{GX12,GK16, GXY17}. We show that Subspace Designs are hard for RPDTs. The general philosophy of LARC like conjectures is that randomized complexity of total functions is well captured/characterized by algebraic or analytical measures of the function like (approximate) rank. For instance, a classical result of Nisan and Szegedy \cite{NS92} confirms this idea in the world of randomized (and quantum) query complexity where the relevant algebraic measure is approximate degree. In the world of PDTs, the natural algebraic notion is approximate Fourier sparsity. The work of \cite{CMS20} refuted this philosophy for parity decision trees via the SINK function, whose approximate Fourier sparsity is $O(n^2)$ and RPDT complexity is $\Theta(\sqrt{n})$. Our lower bounds for functions based on subspace designs yields unconditionally a stronger refutation of this philosophy for the model of parity decision trees. We state here our result in terms of random subspaces because this yields the cleanest formulation.

\begin{restatable}[Main Result]{theorem}{separation}
\label{thm:separation}
 Let $m = 100n$. Let $\mV = \bra{V_1, V_2, \dots, V_m}$ be a set of subspaces of $\zone^n$ chosen independently and uniformly at random from the set of subspaces of dimension $2n/5$. Let $f$ be the function that outputs $1$ on the set $\bigcup_{V \in \mV} V$. With probability $1-o(1)$ the following two statements are true.
    \begin{itemize}
        \item Randomized parity decision tree complexity of $f$ is at least $\Omega(n)$.
        \item The spectral norm of $f$ (sum of absolute values of its Fourier coefficients) is upper bounded by $O(n)$ and its approximate Fourier sparsity is upper bounded by $O(n^3)$.
    \end{itemize}
    Hence there exist functions which have a merely cubic gap between approximate Fourier sparsity and RPDT complexity.
\end{restatable}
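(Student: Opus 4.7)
The first task is to show that, with probability $1-o(1)$, the subspaces in $\mV$ are pairwise trivially intersecting. For a fixed $V_i$ and an independent random $V_j$, each nonzero $v \in V_i$ lies in $V_j$ with probability at most $2^{2n/5-n}=2^{-3n/5}$, so a union bound over the at most $2^{2n/5}$ nonzero vectors of $V_i$ gives $\Pr[V_i \cap V_j \neq \{0\}] \leq 2^{-n/5}$; a further union bound over the $\binom{m}{2}=O(n^2)$ pairs preserves $o(1)$ failure probability. On this event, every nonzero $x\in\zone^n$ lies in at most one $V_i$, so the Boolean OR admits the exact expression
\[
f = \sum_{i=1}^{m} \mathbf{1}_{V_i} - (m-1)\,\mathbf{1}_{\{0\}}
\]
as a real-valued function on $\zone^n$. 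Since $\mathbf{1}_{V_i}$ has Fourier transform supported uniformly on the dual subspace $V_i^{\perp}$, we get $\spectral{\mathbf{1}_{V_i}}=1$, and similarly $\spectral{\mathbf{1}_{\{0\}}}=1$. The triangle inequality now gives $\spectral{f} \leq 2m-1 = O(n)$. The approximate Fourier sparsity bound follows via a standard probabilistic sparsification: sampling $O(\spectral{f}^{2} n)$ Fourier characters proportionally to their magnitudes, the empirical estimate is within $1/3$ of $f$ in $\ell_\infty$ by Hoeffding combined with a union bound over the $2^n$ inputs, yielding sparsity $O(n^3)$.

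\textbf{Plan for the RPDT lower bound.}
By Yao's principle I would exhibit a distribution $\mu$ such that every deterministic parity decision tree of depth $t=cn$ (for a small absolute constant $c < 2/5$) errs with probability $\geq 1/3$ under $\mu$. Take the balanced mixture $\mu=\tfrac{1}{2}\mu_0+\tfrac{1}{2}\mu_1$, where $\mu_1$ is uniform on $\bigcup_i V_i$ and $\mu_0$ is uniform on its complement. A depth-$t$ tree partitions $\zone^n$ into at most $2^t$ affine subspaces $A$ of codimension at most $t$, and the optimal leaf-labelling incurs error $\tfrac{1}{2}\sum_A \min(\mu_0(A),\mu_1(A))$. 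The central claim is a robust subspace-design property for $\mV$: with probability $1-o(1)$ over the random subspaces, for every affine subspace $A$ of codimension at most $cn$ the intersection size $|A \cap \bigcup_i V_i|$ is within a constant factor of its expectation $m\cdot 2^{2n/5-t}$. Given this, $\mu_1(A)$ and $\mu_0(A)$ are both within a constant factor of $2^{-t}$ on every leaf, so the leaf-conditional posterior on $\{0,1\}$ is bounded away from $\{0,1\}$ and a corruption-style calculation yields total error $\geq 1/3-o(1)$.

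\textbf{Main obstacle.}
The hard part is establishing the subspace-design property uniformly over all $\approx 2^{\Theta(n^2)}$ candidate low-codimensional affine subspaces. A Chernoff estimate for a fixed $A$ only yields failure probability $2^{-\Omega(n)}$, which is far too weak for a naive union bound over the partition generated by an adaptive PDT. Overcoming this is precisely where the connection to subspace designs (in the sense of Guruswami--Xing \cite{GX12} and follow-ups) becomes essential: one needs either sharper algebraic concentration for random subspaces or a structured covering argument showing the design property holds simultaneously across all relevant $A$. Once the design property is in hand, translating it into a robust corruption bound that accounts for the PDT's adaptivity is the remaining, largely routine, step.
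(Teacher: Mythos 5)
Your Fourier-analytic bounds are essentially the paper's argument: the pairwise-trivial-intersection event, the exact decomposition $f = \sum_i \mathbf{1}_{V_i} - (m-1)\mathbf{1}_{\{0\}}$, the spectral-norm bound $\leq 2m-1$, and the sparsification (the paper invokes Grolmusz's theorem, which is the same random-sampling argument you sketch). That half is fine.

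The RPDT lower bound, however, has a genuine gap, and you have in fact identified the place where your plan does not close. Your ``central claim'' — that for every affine subspace $A$ of codimension $\leq cn$, $|A \cap \bigcup_i V_i|$ is within a \emph{constant factor} of $m\cdot 2^{2n/5 - t}$ — is both stronger than needed and, in the upper direction, false in the worst case: for the $\Theta(m)$ subspaces that happen to intersect $A$'s carrier non-generically, $|A\cap V_i|$ can be as large as $|V_i| = 2^{2n/5}$, contributing up to $h\cdot 2^t$ times a ``typical'' term; with $t=\Theta(n)$ this blows up, so no union bound can rescue two-sided concentration. The paper never needs this. It needs only the \emph{qualitative} dual-subspace-design property: every affine subspace $W$ of codimension $\leq s$ intersects at least $m-h$ of the $V_i$. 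The union bound for \emph{that} statement does go through, because for a fixed $W$ the failure probability is $\binom{m}{m/10}(n2^{-n/5})^{m/10} = 2^{-\Omega(n^2)}$ (the $m/10 = 10n$ required independent failures each cost $2^{-\Omega(n)}$), which beats the $2^{O(n^2)}$ count of candidate subspaces. Your remark that a ``Chernoff estimate for a fixed $A$ only yields failure probability $2^{-\Omega(n)}$'' is the wrong bookkeeping: you should be applying a tail bound to the $m$ \emph{indicator} variables $\mathbf{1}[V_i \cap W = \{0\}]$, not to the wildly heavy-tailed sizes $|A\cap V_i|$.

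The second missing ingredient is the algebraic dichotomy (\Cref{lem:subspaceavoidance}): an affine subspace $W$ and a subspace $V$ satisfy either $W\cap V=\emptyset$ or $\frac{|W\cap V|}{|V|} \geq \frac{|W|}{2^n}$; there is no ``small but nonzero'' overlap. This is what lets the corruption bound (\Cref{lem:corruptionrpdt}) be converted into the clean combinatorial statement ``$W$ avoids more than $h$ of the $V_i$,'' which the design property then forbids for codimension $\leq s$. Without this lemma your plan has to reason quantitatively about near-zero overlaps, which is exactly where it stalls. In short: replace the two-sided-concentration ``central claim'' with the (provable) hitting-set property of dual subspace designs, and replace the direct per-leaf error tally with the corruption framework plus the zero-or-dense dichotomy; then the RPDT lower bound closes.
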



The two properties of random subspaces appearing in such a collection that we use are the following: each pair of them have no non-trivial intersection. They also form a (dual) subspace design.
We are not able to prove non-trivial lower bounds for the communication problems arising out of Subspace Designs composed with the XOR gadget. However, in Section~\ref{subsec:communication}, we state concrete conjectures, that seem to be interesting from a Fourier analytic and additive combinatorics point of view, which imply linear lower bounds for such communication problems.

\subsection{Organization and plan of the paper}

Section \ref{sec:prelims} contains some basic preliminaries. In Section \ref{sec:rpdt}, we prove our main result, a lower bound on the RPDT complexity of a natural class of functions arising out of subspace designs. In Section \ref{sec:comm}, we state a few plausible conjectures and show that they imply a lower bound on the communication complexity of functions arising out of subspace designs composed with the XOR gadget. Finally, we end up with some open problems in Section \ref{sec:conclusion}.

\section{Preliminaries}\label{sec:prelims}

In this section, we provide some basic preliminaries needed for the paper. Section \ref{subsec:notation} starts off with some notation. Then in Section \ref{subsec:subspaces}, we present some basic facts about subspaces. Then we introduce the basics of our models of computations, parity decision trees and communication protocols in Section \ref{subsec:pdt_etc}. Finally, in Section \ref{subsec:fourier}, we present some basic concepts from Fourier analysis.

\subsection{Notation}\label{subsec:notation}

Given a subspace $S \subseteq \F_2^n$, we use $\dimension(S)$ to denote its dimension and $\codimension(S)$ to denote its codimension i.e. $n - \dimension(S)$. Given the standard bilinear form $\dotp{\cdot}{\cdot}$ on $\F_2^n$, we can define the dual space of $S$ as the set $\bra{\ell \in \F_2^n ~ \mid ~ \forall x \in S ~ \dotp{\ell}{x} = 0}$. It is a subspace of dimension $n-\dimension(S)$ and its dual space is $S$.

Given a subspace $S$ of dimension $k$, fix a basis $L = \bra{\ell_1,\dots,\ell_{n-k}}$ of its dual space. For every point $a \in \F_2^{n-k}$, we can define the set $S^{L}_{a} = \bra{x \in \F_2^n ~ \mid ~ \forall i \in [n-k] ~ \dotp{\ell_i}{x} = a_i}$. These are called affine shifts, or cosets, of $S$. Sets of the kind $S^{L}_{a}$ are also called affine subspaces. Each coset of $S$ also has size $2^k$. We can also define a coset map of $S$ with respect to a basis of its dual space as
   \[ \cosetmap_S^L(x) = (\dotp{\ell_1}{x}, \dots, \dotp{\ell_{n-k}}{x}). \]
It is easy to see that the choice of basis for the dual space does not affect the set of cosets of $S$. It merely affects the string $a \in \F_2^{n-k}$ that is used to refer to a specific coset. Hence we will refer to the coset map as $\cosetmap_S$, and we may choose an arbitrary basis of the dual space of $S$ in order to interpret the coset map.

From here on, we will use $\zone$ to refer to $\F_2$. The values $0$ and $1$ represent the additive and multiplicative identity of $\F_2$.

\subsection{Basic facts about subspaces}\label{subsec:subspaces}

Here we mention two facts about subspaces that will be useful. We include their proofs in \Cref{appendix:subspacefacts}.

\begin{restatable}[Disjoint Subspaces]{lemma}{subspacedisj}\label{lem:subspacedisj}
    Let $S$ be a subspace of $\zone^n$ of dimension $d_1$.
    Let $T$ be a subspace of $\zone^n$ of dimension $d_2$ chosen uniformly at random. Then $\Pr_{T}[S \cap T = \bra{0}] \geq 1 - n2^{d_1+d_2-n}$.
\end{restatable}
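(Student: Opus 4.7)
My plan is to prove this by a straightforward union bound over the nonzero vectors of $S$, after computing the probability that a uniformly random $d_2$-dimensional subspace contains any fixed nonzero vector.

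First I would observe that $S \cap T \neq \{0\}$ iff there exists a nonzero $v \in S$ with $v \in T$, so by a union bound
\[
\Pr_T[S \cap T \neq \{0\}] \;\leq\; \sum_{v \in S \setminus \{0\}} \Pr_T[v \in T] \;=\; (2^{d_1}-1) \cdot \Pr_T[v_0 \in T],
\]
where $v_0$ is any fixed nonzero vector; the last equality uses that $\Pr_T[v \in T]$ depends only on the fact that $v$ is nonzero, by symmetry of the uniform distribution on $d_2$-dimensional subspaces under the action of $\mathrm{GL}_n(\F_2)$ (which acts transitively on nonzero vectors).

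Next I would evaluate $\Pr_T[v_0 \in T]$ by a double-counting argument on pairs $(T, v)$ with $T$ a $d_2$-dimensional subspace and $v \in T \setminus \{0\}$. Counting by $T$ gives $N \cdot (2^{d_2}-1)$ pairs, where $N$ is the total number of $d_2$-dimensional subspaces. Counting by $v$, each nonzero $v$ lies in the same number $N'$ of such subspaces, yielding $(2^n-1) \cdot N'$ pairs. Equating and dividing,
\[
\Pr_T[v_0 \in T] \;=\; \frac{N'}{N} \;=\; \frac{2^{d_2}-1}{2^n-1}.
\]

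Finally I would combine these to get
\[
\Pr_T[S \cap T \neq \{0\}] \;\leq\; (2^{d_1}-1) \cdot \frac{2^{d_2}-1}{2^n-1} \;\leq\; \frac{2^{d_1+d_2}}{2^{n-1}} \;=\; 2 \cdot 2^{d_1+d_2-n},
\]
which is in fact stronger than the stated bound of $n \cdot 2^{d_1+d_2-n}$ (valid for $n \geq 2$). There is no real obstacle here; the only point requiring care is justifying that $\Pr_T[v \in T]$ is independent of the choice of nonzero $v$, which follows cleanly from the transitivity of $\mathrm{GL}_n(\F_2)$ on $\F_2^n \setminus \{0\}$ combined with the $\mathrm{GL}_n(\F_2)$-invariance of the uniform distribution on $d_2$-dimensional subspaces.
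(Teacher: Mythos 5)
Your proof is correct, and it takes a genuinely different route from the paper's. The paper builds $T$ one basis vector at a time, writing $\Pr[S \cap T = \{0\}]$ as a telescoping product of conditional probabilities $\prod_{i=1}^{d_2} \frac{2^n - 2^{d_1+i-1}}{2^n - 2^{i-1}}$ and then lower-bounding this by $(1 - 2^{d_1+d_2-n})^{d_2} \geq 1 - d_2 \, 2^{d_1+d_2-n}$. You instead union-bound over the $2^{d_1}-1$ nonzero vectors of $S$, computing $\Pr_T[v \in T] = \frac{2^{d_2}-1}{2^n-1}$ exactly via the double-counting/orbit-symmetry argument. Both routes are elementary, but yours yields the cleaner constant $2 \cdot 2^{d_1+d_2-n}$ rather than the paper's intermediate $d_2 \cdot 2^{d_1+d_2-n}$ (itself already stronger than the stated $n \cdot 2^{d_1+d_2-n}$); your union bound also sidesteps any reasoning about how a uniform $d_2$-dimensional subspace is sampled, at the cost of needing the transitivity of $\mathrm{GL}_n(\F_2)$ on $\F_2^n \setminus \{0\}$, which you correctly flag and justify. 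The paper's sequential-sampling argument has the mild advantage of being self-contained about what "uniformly at random subspace" means operationally, which fits the probabilistic-construction flavor of how the lemma is used in Theorem \ref{thm:randomspaces}, but either proof serves the lemma equally well.
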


\begin{restatable}{lemma}{subspaceavoidance}\label{lem:subspaceavoidance}
    Let $V$ and $W$ be affine subspaces of $\zone^n$ satisfying \[ \frac{|V \cap W|}{|W|} < \frac{|V|}{2^n}. \] Then $V \cap W = \emptyset$.
\end{restatable}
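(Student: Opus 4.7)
The plan is to prove the contrapositive: assume $V \cap W \neq \emptyset$ and derive the inequality $|V \cap W|/|W| \geq |V|/2^n$. The key observation is that non-empty intersections of affine subspaces are themselves affine subspaces, so we can translate everything back to the linear setting, where the standard dimension/order identity for sums and intersections applies.

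Concretely, I would pick any point $x_0 \in V \cap W$ and write $V = x_0 + V_0$, $W = x_0 + W_0$ with $V_0, W_0$ linear subspaces of $\zone^n$. Then $V \cap W = x_0 + (V_0 \cap W_0)$, so $|V \cap W| = |V_0 \cap W_0|$, and of course $|V| = |V_0|$, $|W| = |W_0|$. Now I invoke the basic identity
\[ |V_0 + W_0| \cdot |V_0 \cap W_0| = |V_0| \cdot |W_0|, \]
which is the group-theoretic analogue of the dimension formula (it follows, for instance, from the second isomorphism theorem applied to $V_0, W_0 \leq \zone^n$ as subgroups).

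Since $V_0 + W_0 \subseteq \zone^n$, we have $|V_0 + W_0| \leq 2^n$, and hence
\[ |V_0 \cap W_0| \geq \frac{|V_0|\,|W_0|}{2^n} = \frac{|V|\,|W|}{2^n}. \]
Dividing by $|W|$ yields $|V \cap W|/|W| \geq |V|/2^n$, contradicting the hypothesis. Therefore $V \cap W$ must be empty.

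There is no real obstacle here; the statement is essentially a repackaging of the product formula for subgroup orders. The only mild subtlety is remembering that one must first translate the affine subspaces to linear ones through a common point in the intersection before applying that formula, since the identity $|A+B||A\cap B| = |A||B|$ is a statement about subgroups, not arbitrary cosets.
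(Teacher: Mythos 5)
Your proof is correct, and it takes a genuinely different (and cleaner) route than the paper. You translate both affine subspaces through a common point $x_0 \in V \cap W$ to reduce to linear subspaces $V_0, W_0$, then invoke the product formula $|V_0 + W_0|\,|V_0 \cap W_0| = |V_0|\,|W_0|$ together with the trivial bound $|V_0 + W_0| \le 2^n$. The paper instead argues combinatorially: it fixes a system of $k$ affine constraints cutting out $W$ (so $|W|/2^n = 2^{-k}$) and imposes them on $V$ one at a time, noting that each new constraint either leaves $|V \cap W_j|$ unchanged, halves it, or kills the intersection outright; hence a nonempty intersection has size at least $|V|/2^k = |V|\,|W|/2^n$. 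Both are proving the same inequality $|V\cap W| \cdot 2^n \ge |V|\,|W|$ under the assumption $V \cap W \neq \emptyset$. Yours is the tighter algebraic packaging via the second isomorphism theorem; the paper's stepwise version is more elementary and is phrased in the constraint-by-constraint language that recurs in the parity-decision-tree analysis, which is presumably why the authors chose it. Your remark about the need to translate through a common point before applying the subgroup formula is exactly the right caveat, since the product formula is false for arbitrary cosets.
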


\subsection{Parity decision trees, communication complexity and the corruption bound}\label{subsec:pdt_etc}

We now define parity decision trees, aimed at computing functions of the form $f: \zone^n \rightarrow \zone$.

\begin{definition}[Parity Decision Tree]
    A parity decision tree $T$ is a binary tree rooted at a node $r$ satisfying the following properties.
    \begin{itemize}
        \item Each internal node is labelled with a set $S \subseteq [n]$.
        \item Each internal node has two children, with one of the edges labelled with a $0$ and the other labelled with a $1$.
        \item Each leaf has a label from $\zone$.
    \end{itemize}
    
    A parity decision tree outputs a value $a \in \zone$ on given an input $x \in \zone^n$ as follows. The ``current node'' below is initialized to the root node $r$.
    \begin{itemize}
        \item The tree computes $b = \oplus_{i \in S} x_i$, where $S$ is the label on the current node.
        \item The tree moves to the child that is reached by taking the edge labelled $b$. If the child is a leaf, output the label of the leaf. Else, repeat the previous step with the child as the current node.
    \end{itemize}
    
    The cost of the parity decision tree is defined as the height of the tree.
\end{definition}

\begin{definition}[Randomized Parity Decision Tree]
    A randomized parity decision tree (RPDT) of cost $c$ is a distribution over deterministic parity decision trees of cost $c$. The output of the RPDT on an input $x$ is the random variable defined as the output of $T$ on $x$, where $T$ is a parity decision tree sampled as per the distribution specified by the RPDT.
\end{definition}

The $\epsilon$-error RPDT complexity of a function $f$, denoted $\RPDT{\epsilon}(f)$, is the minimum cost of an RPDT $T$ such that $\forall x, \Pr[f(x) = T(x)] \geq 1-\epsilon$.

\begin{lemma}[Corruption, RPDT version]\label{lem:corruptionrpdt}
    Let $f: \zone^n \rightarrow \zone$. Let $\mu$ be a distribution on $\zone^n$ such that $\mu(f^{-1}(0)) = \half$. Let $\epsilon \leq 1/8$.
    Then an $\epsilon$-error cost-$c$ RPDT computing $f$ implies the existence of an affine subspace $W$ such that
    \begin{itemize}
        \item $\mu(W \cap f^{-1}(1)) \leq 4\epsilon \mu(W)$ and
        \item $\codimension(W) \leq c$.
    \end{itemize}
\end{lemma}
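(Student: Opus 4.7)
My plan is to reduce to a deterministic parity decision tree by a routine averaging argument, then exploit the fact that the leaves of any deterministic PDT partition $\zone^n$ into affine subspaces of codimension at most the depth, and finally extract the desired subspace $W$ by a corruption-style averaging step inside the leaves that output $0$.

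For the derandomization: since the RPDT has pointwise error at most $\epsilon$, its expected error under $\mu$ is also at most $\epsilon$, so there is some deterministic parity decision tree $T$ of cost $c$ in its support with $\mu(\{x : T(x) \neq f(x)\}) \leq \epsilon$, and I fix such a $T$. Each root-to-leaf path in $T$ records the outcomes of at most $c$ parity queries, so the set of inputs reaching a given leaf is cut out by $\leq c$ linear equations over $\F_2$ and is therefore an affine subspace of $\codimension$ at most $c$. The leaves thus partition $\zone^n$ into such affine subspaces. Let $\mathcal{L}_0$ denote the collection of leaves labeled $0$ and set $L_0 = \bigsqcup_{W \in \mathcal{L}_0} W$.

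Now for the corruption averaging: every point of $L_0 \cap f^{-1}(1)$ is an error of $T$, so $\mu(L_0 \cap f^{-1}(1)) \leq \epsilon$. Conversely $\mu(L_0 \cap f^{-1}(0)) \geq \mu(f^{-1}(0)) - \epsilon = \half - \epsilon$, giving $\mu(L_0) \geq \half - \epsilon \geq 3/8$ since $\epsilon \leq 1/8$. If every $W \in \mathcal{L}_0$ violated the conclusion, i.e.\ $\mu(W \cap f^{-1}(1)) > 4\epsilon\, \mu(W)$, then summing over the leaves in $\mathcal{L}_0$ would yield
\[ \epsilon \;\geq\; \mu(L_0 \cap f^{-1}(1)) \;=\; \sum_{W \in \mathcal{L}_0} \mu(W \cap f^{-1}(1)) \;>\; 4\epsilon\, \mu(L_0) \;\geq\; 4\epsilon \cdot \tfrac{3}{8} \;=\; \tfrac{3\epsilon}{2}, \]
a contradiction. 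Hence some $W \in \mathcal{L}_0$ satisfies both required bullets. The only structural observation being used is that leaves of a parity decision tree correspond to affine subspaces of codimension at most the depth; given that, everything else is bookkeeping and I do not anticipate any real obstacle in writing this out.
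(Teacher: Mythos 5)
Your proof is correct and follows essentially the same route as the paper: average over the support of the RPDT to obtain a deterministic PDT with $\mu$-error at most $\epsilon$, observe that its leaves partition $\zone^n$ into affine subspaces of codimension at most $c$, and then apply a corruption-style averaging over the $0$-labeled leaves to obtain a contradiction if every such leaf were more than $4\epsilon$-corrupted. The only cosmetic difference is that you plug in $\epsilon \leq 1/8$ to bound $\mu(L_0) \geq 3/8$ before multiplying, whereas the paper keeps $\half - \epsilon$ symbolic and simplifies $2\epsilon - 4\epsilon^2 > \epsilon$ at the end; the underlying argument is identical.
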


\begin{proof}
    Note that an $\epsilon$-error cost-$c$ RPDT $T$ computing $f$ implies that for any distribution $\mu$ over the inputs of $f$, there is an RPDT whose expected error, $\E_{T,x \sim \mu}[|T(x) - f(x)|]$, is at most $\epsilon$. Since $T$ is a distribution over deterministic parity decision trees, there is a deterministic parity decision tree whose expected error is also at most $\epsilon$.

    Suppose that a subspace such as the one posited in the lemma statement did not exist. Then for any cost-$c$ parity decision tree $T$, we may compute the error made as follows. Note that the set of inputs that reach any specific leaf forms an affine subspace of codimension at most $c$, with each pair of such affine subspaces being disjoint. Let $\mathcal{L}$ be the set of these affine subspaces corresponding to the leaves of $T$ that are labelled $0$. Then $\sum_{V \in \mathcal{L}} \mu(V) \geq \half - \epsilon$, since otherwise $T$ would be outputting $1$ on more than an $\epsilon$ mass of $0$-inputs. But then $\sum_{V \in \mathcal{L}} \mu(V \cap f^{-1}(1)) \geq \sum_{V \in \mathcal{L}} 4\epsilon\mu(V) \geq 4\epsilon(\half - \epsilon) \geq 2\epsilon - 4\epsilon^2 > \epsilon$. So on more than an $\epsilon$ mass of $1$-inputs, $T$ outputs $0$. Hence the tree $T$ is erring on a larger than $\epsilon$ mass of inputs and we have a contradiction.
\end{proof}

We now move to communication complexity. We are concerned with the number of bits that two parties Alice and Bob need to communicate in order to compute a function $F: \mathcal{X} \times \mathcal{Y} \rightarrow \zone$. See \cite{KN97} for a thorough introduction to the topic. We will use that a deterministic communication protocol of cost $c$ partitions the input space of $F$ into at most $2^c$ rectangles (sets of the form $A \times B$ for $A \subseteq \mathcal{X}, B \subseteq \mathcal{Y}$), and it outputs the same value on all inputs in a rectangle. Randomized communication is defined akin to randomized parity decision trees.

\begin{definition}[Randomized Communication Protocol]
    A randomized communication protocol of cost $c$ is a distribution over deterministic communication protocols of cost $c$. The output of the randomized communication protocol on an input $x$ is the random variable defined as the output of $T$ on $(x,y)$, where $T$ is a communication protocol sampled as per the distribution specified by the randomized communication protocol.
\end{definition}

The $\epsilon$-error randomized communication complexity of a function $F$ is the minimum cost of an randomized communication protocol $T$ such that $\forall x,y, \Pr[F(x,y) = T(x,y)] \geq 1-\epsilon$.

The following is a lower-bound technique for randomized communication complexity akin to the lower bound for RPDTs given previously. This technique is well-known with roots in \cite{Yao83}.

\begin{lemma}[Corruption]\label{lem:corruption}
    Let $F: \zone^n \rightarrow \zone$. Let $\nu$ be a distribution on $\zone^n$ such that $\nu(F^{-1}(0)) = \half$. Let $\epsilon < 1/8$.
    Then an $\epsilon$-error cost-$c$ randomized communication protocol computing $F$ implies the existence of a rectangle $R$ such that
    \begin{itemize}
        \item $\nu(R \cap F^{-1}(1)) \leq 4\epsilon \nu(R)$ and
        \item $\nu(R) \geq 2^{-c-3}$.
    \end{itemize}
\end{lemma}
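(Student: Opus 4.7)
The plan is to mirror the proof of \Cref{lem:corruptionrpdt}, replacing the leaves and their associated affine subspaces by the rectangles in the partition induced by a deterministic protocol, and replacing the codimension bound by a direct lower bound on rectangle $\nu$-mass, using the fact that a cost-$c$ protocol produces at most $2^c$ rectangles.

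First I would reduce to the deterministic case by averaging: since the randomized protocol errs on each input with probability at most $\epsilon$, its expected error under any distribution, in particular $\nu$, is at most $\epsilon$; hence there is a deterministic cost-$c$ protocol $T$ whose expected error under $\nu$ is at most $\epsilon$. Letting $\mathcal{R}_0$ denote the collection of rectangles of $T$ labeled $0$, the hypothesis $\nu(F^{-1}(0)) = 1/2$ together with the error bound gives $\sum_{R \in \mathcal{R}_0} \nu(R) \geq 1/2 - \epsilon$ (otherwise $T$ would already err on more than an $\epsilon$-mass of $0$-inputs).

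The key step is to assume for contradiction that no rectangle satisfies both bulleted conditions. I would split $\mathcal{R}_0$ into \emph{small} rectangles, those with $\nu(R) < 2^{-c-3}$, and \emph{corrupted} rectangles, those with $\nu(R \cap F^{-1}(1)) > 4\epsilon\,\nu(R)$; by assumption every rectangle in $\mathcal{R}_0$ lies in at least one of the two classes. The small rectangles together carry $\nu$-mass at most $2^c \cdot 2^{-c-3} = 1/8$, so the corrupted rectangles carry $\nu$-mass at least $(1/2 - \epsilon) - 1/8 = 3/8 - \epsilon$. Summing the corruption bound over these rectangles, the $\nu$-mass of $F^{-1}(1)$ on which $T$ outputs $0$ strictly exceeds $4\epsilon(3/8 - \epsilon)$; a quick calculation shows that this is strictly greater than $\epsilon$ precisely when $\epsilon < 1/8$, matching the hypothesis and contradicting the error bound on $T$.

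There is no real obstacle here; the proof is a short adaptation of \Cref{lem:corruptionrpdt}. The only subtlety is to make sure that the threshold separating small from corrupted rectangles is chosen so that the small contribution $2^c \cdot 2^{-c-3}$ is small enough that the residual mass $3/8 - \epsilon$, multiplied by the corruption factor $4\epsilon$, still beats the error budget $\epsilon$; this is exactly what motivates the choice of $2^{-c-3}$ in the lemma statement.
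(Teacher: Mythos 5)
Your proposal is correct, and the arithmetic checks out: with $\sum_{R \in \mathcal{R}_0} \nu(R) \geq \tfrac12 - \epsilon$, the small rectangles (at most $2^c$ of them, each of $\nu$-mass below $2^{-c-3}$) carry at most $\tfrac18$, so the corrupted ones carry at least $\tfrac38 - \epsilon$, forcing error strictly above $4\epsilon(\tfrac38 - \epsilon) = \tfrac{3\epsilon}{2} - 4\epsilon^2 > \epsilon$ whenever $0 < \epsilon < \tfrac18$, a contradiction. The paper does not include a proof of \Cref{lem:corruption} at all --- it invokes the result as well-known with roots in Yao's work --- so there is nothing to compare against directly; your argument is the standard corruption proof and mirrors the paper's given proof of the RPDT analogue (\Cref{lem:corruptionrpdt}), with the necessary extra step of splitting into small and corrupted rectangles. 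That split is needed precisely because a communication protocol's rectangles have no a priori lower bound on their measure, whereas RPDT leaves automatically have codimension at most $c$, which is why the RPDT version has only one bulleted condition to handle and the communication version needs the $2^{-c-3}$ threshold.
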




\subsection{Basic notions from Fourier analysis}\label{subsec:fourier}

We now move to Fourier analysis, a particularly useful tool in analyzing Boolean functions. We define the parity functions as follows. For each $S \subseteq [n]$, we define a parity function $\chi_S: \zone^n \rightarrow \pmone$ as $\chi_S(x) = (-1)^{\sum_{i \in S} x_i}$. These form an orthonormal basis for the class of functions from $\zone^n$ to $\R$ under the inner product $\langle f,g \rangle = \frac{1}{2^n} \sum_{x \in \zone^n} f(x)g(x)$. Hence every such function $f$ can be written as $\sum_S \hat{f}(S) \chi_S$. The values $\hat{f}(S)$ are referred to as Fourier coefficients and can be computed as $\langle f, \chi_S \rangle$. Let $\hat{f}$ denote the vector $(\hat{f}(S))_{S \subseteq [n]} \in \R^{2^n}$, known as the Fourier spectrum. We define the following measures of $f$.

\begin{itemize}
    \item The sparsity of $f$ is $\sparsity{f}$.
    \item The spectral norm of $f$ is $\spectral{f}$.
    \item The $\epsilon$-approximate sparsity of $f$, $\asparsity{f}{\epsilon}$, is $\min_{g : \forall x ~ |g(x) - f(x)| \leq \epsilon} \sparsity{g}$.
    \item The $\epsilon$-approximate spectral norm of $f$, $\aspectral{f}{\epsilon}$, is $\min_{g : \forall x ~ |g(x) - f(x)| \leq \epsilon} \spectral{g}$.
\end{itemize}

The Fourier spectrum of a subspace is easy to compute. (See, for instance, \cite{O14}.) It follows from the spectrum that any subspace $V \subseteq \zone^n$, the function $\mathds{1}_V$ satisfies $\spectral{\mathds{1}_V} = 1$.

For a function $f : \zone^n \rightarrow \R$ its composition with $\XOR$, denoted $f \circ \XOR$, is a function $F: \zone^n \times \zone^n \rightarrow \R$ defined as $F(x,y) = f(x \oplus y)$ where $x \oplus y$ is the bitwise $\XOR$ of $x$ and $y$.

It is a well known fact that for a function $F := f \circ \XOR$, the rank of the communication matrix of $F$, denoted $\mathsf{rank}(F)$, is equal to $\sparsity{f}$. The $\epsilon$-approximate rank of $F$ is at most the $\epsilon$-approximate sparsity of $f$.

We note a theorem useful in showing that a function has small approximate sparsity.

\begin{theorem}[Grolmusz's Theorem~\cite{BS90,Gro97,Zhang14,CMS20}]\label{thm:grolmusz}
    For any $f : \zone^n \rightarrow \zone$ and $\delta > \epsilon \geq 0$,
    \[ \asparsity{f}{\delta} \leq O\left(\aspectral{f}{\epsilon}^2 n/(\delta-\epsilon)^2\right). \]
\end{theorem}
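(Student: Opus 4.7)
The plan is to prove Grolmusz's theorem by a standard probabilistic sampling argument applied to the Fourier expansion of a near-optimal spectral-norm approximator of $f$. First, let $g$ be a function satisfying $|g(x) - f(x)| \leq \epsilon$ for every $x$ with $\spectral{g} = \aspectral{f}{\epsilon}$ (or arbitrarily close to it). Write the Fourier expansion as
\[ g(x) = \sum_{S} \hat{g}(S) \chi_S(x) = \spectral{g} \sum_{S} p(S) \cdot \mathrm{sgn}(\hat{g}(S)) \chi_S(x), \]
where $p(S) = |\hat{g}(S)|/\spectral{g}$ is a probability distribution on subsets of $[n]$. Thus $g(x) = \spectral{g} \cdot \Eop_{S \sim p}[\mathrm{sgn}(\hat{g}(S)) \chi_S(x)]$, expressing $g(x)$ as a scaled expectation of a bounded random variable taking values in $[-1,1]$.

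The next step is to draw $k$ independent samples $S_1, \ldots, S_k$ from $p$ and form the empirical estimator
\[ h(x) = \frac{\spectral{g}}{k} \sum_{i=1}^{k} \mathrm{sgn}(\hat{g}(S_i)) \chi_{S_i}(x). \]
For each fixed $x$, the random variables $\mathrm{sgn}(\hat{g}(S_i)) \chi_{S_i}(x) \in [-1,1]$ are i.i.d.\ with mean $g(x)/\spectral{g}$, so Hoeffding's inequality gives
\[ \Pr\!\left[|h(x) - g(x)| > \delta - \epsilon\right] \leq 2\exp\!\left(-\frac{k(\delta-\epsilon)^2}{2\spectral{g}^2}\right). \]
Taking a union bound over all $2^n$ inputs, one sees that for $k = C \cdot \spectral{g}^2 \cdot n/(\delta-\epsilon)^2$ with a sufficiently large absolute constant $C$, the failure probability is strictly less than $1$. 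Hence there exists a choice of $S_1, \ldots, S_k$ for which $|h(x) - g(x)| \leq \delta - \epsilon$ for every $x \in \zone^n$.

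Finally, by the triangle inequality, this $h$ satisfies $|h(x) - f(x)| \leq |h(x) - g(x)| + |g(x) - f(x)| \leq (\delta-\epsilon) + \epsilon = \delta$ for every $x$, so $h$ is a $\delta$-approximator of $f$. Because $h$ is a linear combination of at most $k$ distinct parity characters, $\sparsity{h} \leq k$, and therefore $\asparsity{f}{\delta} \leq k = O(\aspectral{f}{\epsilon}^2 \, n / (\delta-\epsilon)^2)$, as claimed. The only genuinely delicate point is balancing the Hoeffding deviation $\delta-\epsilon$ against the $2^n$ union-bound cost; the $n$ in the final bound is precisely the logarithmic price for this union bound. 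Everything else (the probabilistic interpretation of the expansion, the triangle inequality, and the fact that an empirical average of $k$ characters has sparsity at most $k$) is routine.
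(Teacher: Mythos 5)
Your proof is correct: it is the standard probabilistic sparsification argument (sample characters according to the normalized Fourier weights of a near-optimal $\epsilon$-approximator, apply Hoeffding plus a union bound over the $2^n$ inputs, then the triangle inequality), which is exactly the argument behind the cited sources; the paper itself states \Cref{thm:grolmusz} without proof. The only cosmetic caveat is the degenerate case $\aspectral{f}{\epsilon}=0$, where the sampling distribution is undefined but the zero function trivially witnesses the bound.
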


We conclude the preliminaries with the useful notion of entropy.

\begin{definition}[Entropy]
  Let $X$ be a discrete random variable. The entropy $H(X)$ is defined as
  \[
    H(X) \defeq \sum_{s \in \textnormal{supp(X)}} \Pr[X=s] \log \left(\frac{1}{\Pr[X=s]}\right).
  \]
\end{definition}
    
\begin{fact}[Folklore]
    $|\textnormal{supp}(X)| = k \implies H(X) \leq \log k$, with equality if and only if $X$ is uniform.
\end{fact}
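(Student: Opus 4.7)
The plan is a one-line application of Jensen's inequality. Writing $p_s \defeq \Pr[X=s]$ for each $s \in \textnormal{supp}(X)$, so that $\sum_s p_s = 1$ and the sum ranges over exactly $k$ terms by hypothesis, I would recast the entropy as an expectation under the law of $X$:
\[
H(X) \;=\; \sum_{s} p_s \log\!\frac{1}{p_s} \;=\; \Eop_{s \sim X}\!\left[\log\frac{1}{p_s}\right].
\]
Because $\log$ is concave on $(0,\infty)$, Jensen's inequality then gives
\[
\Eop_{s \sim X}\!\left[\log\frac{1}{p_s}\right] \;\le\; \log \Eop_{s \sim X}\!\left[\frac{1}{p_s}\right] \;=\; \log\!\left(\sum_{s} p_s \cdot \frac{1}{p_s}\right) \;=\; \log k,
\]
which establishes $H(X) \le \log k$ immediately.

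For the equality case, I would use that $\log$ is strictly concave on $(0,\infty)$, so the Jensen step above is tight if and only if the random variable $1/p_s$ is almost surely constant under $s \sim X$. This forces $p_s$ to take a common value on every element of the support, which combined with $\sum_s p_s = 1$ over $k$ atoms yields $p_s = 1/k$ for all $s \in \textnormal{supp}(X)$, i.e.\ $X$ is uniform on its support. The converse is a direct calculation: if $X$ is uniform on a size-$k$ set, then $H(X) = \sum_s (1/k)\log k = \log k$.

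There is essentially no obstacle here; the only care needed is in the equality case, where one must appeal to \emph{strict} concavity of $\log$, rather than mere concavity. An equivalent packaging would be to note that $\log k - H(X)$ equals the Kullback--Leibler divergence $D(p \,\|\, u_k)$ from $p$ to the uniform distribution $u_k$ on $\textnormal{supp}(X)$; non-negativity of KL divergence (itself a consequence of Jensen's inequality) then gives the bound, with equality exactly when $p = u_k$.
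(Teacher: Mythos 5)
Your proof is correct, and since the paper states this fact as folklore without giving any proof, there is nothing to compare it against; your Jensen's-inequality argument (equivalently, non-negativity of the KL divergence to the uniform distribution on the support) is the canonical derivation, and you correctly invoke \emph{strict} concavity of $\log$ to settle the equality case.
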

    

\section{The RPDT Complexity of Dual Subspace Designs}\label{sec:rpdt}

In this section, we prove a lower bound on the RPDT complexity of a natural class of functions arising from subspace designs. A subspace design is a set of subspaces such that any small dimensional subspace non-trivially intersects only a few members of the set. (These are referred to as weak subspace designs in~\cite{GK16}.)

\begin{definition}[Subspace Design]
    An $n$-dimensional $(s,\A)$-subspace design is a set of subspaces $\bra{S_1, S_2, \cdots, S_m}$ of $\zone^n$ such that for all subspaces $T$ of dimension at most $s$, at most $\A$ of the $m$ subspaces intersect $T$ non-trivially.
\end{definition}

We call a set of subspaces $\bra{V_1, V_2, \cdots, V_m}$ of $\zone^n$ an $n$-dimensional $(s,\A)$-dual subspace design if their duals form an $(s,\A)$-subspace design. Dual subspace designs have an alternate characterization based on the notion of independent subspaces.

\begin{definition}[Independent Subspaces]
    Subspaces $S,T \subseteq \zone^n$ are independent if their coset maps are independent. That is, let $L_S$ and $L_T$ be arbitrary bases for the dual spaces of $S$ and $T$. For a variable $x$ chosen uniformly at random from $\zone^n$, consider the random variables $\cosetmap_S(x)$ and $\cosetmap_T(x)$. For every $a \in \F_2^{\codimension(S)},b \in \F_2^{\codimension(T)}$, we want that $\Pr[\cosetmap_S(x)=a | \cosetmap_T(x)=b] = \Pr[\cosetmap_S(x)=a] = 2^{-\codimension(S)}$.

    In particular this implies that every coset of $S$ intersects with every coset of $T$.
\end{definition}

We now state the alternate characterization of dual subspace designs.

\begin{claim}\label{clm:subspace_design_ind}
    The set $\bra{V_1, V_2, \cdots, V_m}$ of $\zone^n$ is an $n$-dimensional $(s,\A)$-dual subspace design if and only if for all subspaces $W$ of \emph{codimension} at most $s$, at least $m-\A$ of the $m$ subspaces are \emph{independent} from $W$.
\end{claim}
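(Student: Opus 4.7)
The plan is to reduce the equivalence in the claim to one clean linear-algebraic fact: two subspaces $V, W \subseteq \zone^n$ are independent if and only if their dual spaces meet only at the origin. Given this, the statement follows by unfolding the definition of a dual subspace design and transporting the quantifier over $T$ through the inclusion-reversing bijection $T \mapsto T^{\perp}$, which sends subspaces of dimension at most $s$ bijectively onto subspaces of codimension at most $s$.

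First I would prove the key lemma: $V$ and $W$ are independent iff $V^{\perp} \cap W^{\perp} = \bra{0}$. Fix bases $L_V$ and $L_W$ of $V^{\perp}$ and $W^{\perp}$ respectively, and consider the linear map $\Phi : \zone^n \to \zone^{\codimension(V) + \codimension(W)}$ defined by $\Phi(x) = (\cosetmap_V(x), \cosetmap_W(x))$; its matrix has rows given by $L_V \cup L_W$. The defining probability statement for independence says exactly that $\Phi(x)$ is uniformly distributed on $\zone^{\codimension(V)+\codimension(W)}$ when $x$ is uniform on $\zone^n$, which is equivalent to $\Phi$ being surjective, equivalently to the vectors in $L_V \cup L_W$ being linearly independent, equivalently to $V^{\perp} \cap W^{\perp} = \bra{0}$. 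Each of these ``equivalents'' is an immediate dimension count, and the independence-of-basis claim in the definition comes for free since the right-hand side $V^{\perp} \cap W^{\perp} = \bra{0}$ is basis-free.

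With the key lemma in hand, the rest is a direct rewriting. By definition, $\bra{V_1, \ldots, V_m}$ is an $(s,\A)$-dual subspace design iff for every subspace $T$ of dimension at most $s$, at most $\A$ of the $V_i^{\perp}$ meet $T$ non-trivially. Substituting $W = T^{\perp}$, so that $\codimension(W) = \dimension(T) \leq s$, the condition ``$V_i^{\perp} \cap T \neq \bra{0}$'' becomes, via the key lemma applied to $V_i$ and $W$, the condition ``$V_i$ is not independent from $W$.'' Since $T \mapsto T^{\perp}$ is a bijection between subspaces of dimension at most $s$ and subspaces of codimension at most $s$, the two quantifications match up exactly, yielding: at least $m - \A$ of the $V_i$ are independent from every such $W$, as claimed. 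I expect no genuine obstacle in this proof; the only point requiring a little care is the bijection argument, which rests on $\dimension(T^{\perp}) = n - \dimension(T)$ and the involutivity of $T \mapsto T^{\perp}$.
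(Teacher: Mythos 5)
Your proof is correct and follows the same overall route as the paper: reduce the claim to the lemma that two subspaces are independent iff their dual spaces intersect trivially, then transfer the dual subspace design condition across the bijection $T \mapsto T^{\perp}$. Your proof of the key lemma is a bit more streamlined than the paper's two-directional argument (you phrase independence as uniformity of the combined map $\Phi(x) = (\cosetmap_V(x), \cosetmap_W(x))$ and invoke surjectivity/linear independence of the rows), but the underlying linear-algebra content is the same, so this counts as essentially the same approach.
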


This claim follows from the following lemma relating trivial subspace intersections and independent subspaces.

\begin{lemma}[Independent Subspaces]
    Subspaces $S$ and $T$ of $\F_2^n$ are independent if and only if the dual space of $S$ and the dual space of $T$ intersect trivially (i.e. only at the point $0 \in \F_2^n$).
\end{lemma}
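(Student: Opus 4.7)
The plan is to reduce the independence condition to a rank condition on the union of dual bases, and then observe that the rank condition is exactly the triviality of the intersection of the dual spaces.

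First, I would combine the two coset maps into a single linear map $\phi : \F_2^n \to \F_2^{\codimension(S)} \times \F_2^{\codimension(T)}$ defined by $\phi(x) = (\cosetmap_S(x), \cosetmap_T(x))$. Because each $\cosetmap$ is built from a basis of the corresponding dual space, each of its coordinates corresponds to a linearly independent family of functionals, so each marginal $\cosetmap_S(x)$ and $\cosetmap_T(x)$ is uniformly distributed whenever $x$ is uniform on $\F_2^n$. Independence of $S$ and $T$ therefore amounts to asking that $\phi(x)$ is uniformly distributed on the whole of $\F_2^{\codimension(S)+\codimension(T)}$. For a linear map on a uniform input, uniformity of the output is equivalent to surjectivity (a non-surjective linear map has a proper subspace as its image, giving probability $0$ to points outside it, which contradicts uniformity; conversely, a surjective linear map has equal-sized fibers and hence a uniform output on a uniform input). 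So independence of $S$ and $T$ is equivalent to surjectivity of $\phi$.

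Next, I would translate surjectivity of $\phi$ into the desired intersection condition. The matrix of $\phi$ has, as its $\codimension(S)+\codimension(T)$ rows, the chosen basis vectors $\ell_1^S,\dots,\ell_{\codimension(S)}^S$ of the dual of $S$ followed by the basis vectors $\ell_1^T,\dots,\ell_{\codimension(T)}^T$ of the dual of $T$. The map is surjective iff these rows are linearly independent over $\F_2$. Suppose $\sum_i a_i \ell_i^S + \sum_j b_j \ell_j^T = 0$; then $v \defeq \sum_i a_i \ell_i^S = \sum_j b_j \ell_j^T$ (using $-1 = 1$ in $\F_2$) lies in $S^\perp \cap T^\perp$. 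If $S^\perp \cap T^\perp = \{0\}$, then $v = 0$, and since each of $\{\ell_i^S\}$ and $\{\ell_j^T\}$ is a basis for its dual space, all the $a_i$ and $b_j$ must vanish, establishing linear independence of the rows. Conversely, any nonzero $v \in S^\perp \cap T^\perp$ gives two different representations with not all coefficients zero, yielding a nontrivial linear dependence among the rows.

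Chaining the two equivalences gives: $S$ and $T$ are independent $\iff$ $\phi$ is surjective $\iff$ the union of the chosen dual bases is linearly independent $\iff$ $S^\perp \cap T^\perp = \{0\}$. There is no real obstacle; the only point worth being careful about is the equivalence between uniformity of $\phi(x)$ and surjectivity of $\phi$, which relies crucially on $x$ being uniform and on $\phi$ being $\F_2$-linear, and the bookkeeping that shows the choice of bases does not affect the conclusion (any other bases for $S^\perp, T^\perp$ span the same subspaces, so the intersection and the row rank are unchanged).
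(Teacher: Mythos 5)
Your proof is correct and takes essentially the same approach as the paper: both arguments reduce the question to whether the union of a basis of $S^\perp$ with a basis of $T^\perp$ is linearly independent. The only stylistic difference is that you package both directions into a single chain of equivalences via surjectivity of the combined map $\phi = (\cosetmap_S, \cosetmap_T)$, whereas the paper handles the forward direction by choosing a basis containing a common vector $\ell$ and the backward direction by a direct coset-counting computation; your version is a bit more streamlined but the underlying mechanism is the same.
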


\begin{proof}
    Let $V$ and $W$ be the dual spaces of $S$ and $T$ respectively. If $V$ and $W$ intersected at a non-zero point $\ell \in \F_2^n$, then consider bases $L_S$ and $L_T$ for $V$ and $W$ respectively, wherein $\ell$ is the first element of $L_S$ and also the first element of $L_T$. The coset maps of $S$ and $T$ with this choice of $L_S$ and $L_T$ cannot be independent since for all $x \in \F_2^n$, the first entries of $\cosetmap_S^{L_S}(x)$ and $\cosetmap_T^{L_T}(x)$ will always agree.
    
    For the other direction, let $L_S$ and $L_T$ be arbitrary bases for $V$ and $W$ respectively. We will show that if $V$ and $W$ intersect trivially, then the coset maps are independent. Assuming $V$ and $W$ intersect trivially, this means that $\spa(L_S) \cap \spa(L_T) = \bra{0}$. Hence $L = L_S \cup L_T$ is an independent set of size $\dimension(V) + \dimension(W)$. Consider the subspace $X$ with basis $L$, and let $R$ be its dual subspace. The cosets of $R$ each have size $2^{n - \dimension(V) - \dimension(W)}$. For any $a \in \F_2^{\codimension(S)},b \in \F_2^{\codimension(T)}$, the set $\bra{x ~ \mid ~ \cosetmap_S^{L_S}(x) = a \wedge \cosetmap_T^{L_T}(x) = b}$ is a coset of $R$. Hence $\Pr[\cosetmap_S^{L_S}(x)=a | \cosetmap_T^{L_T}(x)=b] = 2^{- \dimension(V) - \dimension(W)}/2^{-\dimension(W)} = 2^{-\dimension(V)}$
\end{proof}

A useful corollary of Claim \ref{clm:subspace_design_ind} is that an $(s,\A)$-dual subspace design also forms a hitting set for the set of all affine subspaces of codimension at most $s$. We will use this fact to lower bound the randomized parity decision tree complexity of unions of subspaces.

\begin{corollary}\label{cor:dualdesignsashittingsets}
        Let $\bra{V_1, V_2, \cdots, V_m}$ be an $n$-dimensional $(s,\A)$-dual subspace design. For all affine subspaces $W$ of \emph{codimension} at most $s$, at least $m-\A$ of the $m$ subspaces intersect with $W$.
\end{corollary}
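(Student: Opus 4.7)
The plan is to leverage \Cref{clm:subspace_design_ind} together with the defining property of independent subspaces, which explicitly guarantees that every coset of one intersects every coset of the other.

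First, given an affine subspace $W$ of codimension at most $s$, I would write $W = w + W_0$, where $W_0$ is the underlying linear subspace (also of codimension at most $s$) and $w \in \zone^n$ is a representative. Thus $W$ is a coset of $W_0$.

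Next, I would apply \Cref{clm:subspace_design_ind} to the linear subspace $W_0$. Since $\codimension(W_0) \le s$, the characterization yields that at least $m-\A$ of the subspaces $V_1, \ldots, V_m$ are independent from $W_0$. Fix any such $V_i$; by the definition of independent subspaces stated in the paper, every coset of $V_i$ has nonempty intersection with every coset of $W_0$. Taking the trivial coset $V_i$ itself (viewed as a coset of $V_i$) and the coset $W$ of $W_0$, we conclude $V_i \cap W \neq \emptyset$. Hence at least $m-\A$ of the $V_i$ intersect $W$, as desired.

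The entire argument is just unwinding definitions: the real content sits in \Cref{clm:subspace_design_ind} and in the observation embedded in the definition of independence that it forces every pair of cosets to meet. There is essentially no obstacle here; the only mild subtlety worth being explicit about is passing from the affine subspace $W$ to its linear part $W_0$ before invoking \Cref{clm:subspace_design_ind}, which is stated for linear subspaces, and then using coset-wise intersection to recover a point inside the original affine $W$.
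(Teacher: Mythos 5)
Your proposal is correct and takes essentially the same route as the paper: it invokes Claim~\ref{clm:subspace_design_ind} applied to the linear part $W_0$ of $W$ and then uses the observation (already recorded in the definition of independent subspaces) that independence forces every coset of $V_i$ to meet every coset of $W_0$. You simply unpack the paper's one-line argument a bit more explicitly, which is fine.
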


\begin{proof}
    This follows from Claim \ref{clm:subspace_design_ind} and the fact that if two subspaces $S$ and $T$ are independent, then $S$ will intersect any affine shift of $T$ non-trivially.
\end{proof}

We are now ready to prove the main theorem of the section.

\begin{theorem}\label{thm:dualsubspacehardness}
    Let $\mV$ be an $n$-dimensional $(s,\A)$-dual subspace design of size $m$.
    
    Let $f$ be the function defined as $f^{-1}(1) = \bigcup_{V \in \mV} V$. We now show that $\RPDT{\epsilon}(f) \geq s$ as long as $\epsilon < \frac{m-\A}{8m} \frac{|f^{-1}(0)|}{2^{n}}$.
\end{theorem}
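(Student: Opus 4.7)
The plan is to apply the corruption bound (\Cref{lem:corruptionrpdt}) to a distribution that directly exploits the dual subspace design structure. Define $\mu = \mu_0 + \mu_1$, where $\mu_0$ is uniform on $f^{-1}(0)$ with total mass $\tfrac{1}{2}$, and $\mu_1$ places mass $1/(2m)$ uniformly on each subspace $V_i$, so $\mu_1(x) = \sum_{i=1}^m \mathds{1}[x \in V_i]/(2m|V_i|)$. A direct calculation gives $\mu_1(\zone^n) = \tfrac{1}{2}$, and $\mu_1$ is supported on $f^{-1}(1) = \bigcup_i V_i$, so $\mu$ is a probability distribution with $\mu(f^{-1}(0)) = \tfrac{1}{2}$, as required.

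Assume toward contradiction that $\RPDT{\epsilon}(f) \leq s-1$. By \Cref{lem:corruptionrpdt}, there exists an affine subspace $W$ with $\codimension(W) \leq s-1$ and $\mu(W \cap f^{-1}(1)) \leq 4\epsilon\,\mu(W)$. Because $\mu_0$ is supported on $f^{-1}(0)$ and $\mu_1$ on $f^{-1}(1)$, this simplifies to $\mu_1(W) \leq 4\epsilon(\mu_0(W) + \mu_1(W))$, i.e.\ $(1-4\epsilon)\mu_1(W) \leq 4\epsilon\mu_0(W)$. The upper bound $\mu_0(W) \leq |W|/(2|f^{-1}(0)|)$ is immediate from the definition of $\mu_0$.

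The crux is a matching lower bound on $\mu_1(W)$ using the dual subspace design property. Since $\codimension(W) \leq s$, \Cref{cor:dualdesignsashittingsets} guarantees that $W$ intersects at least $m - \A$ of the subspaces $V_i$ non-trivially; for each such $V_i$, the contrapositive of \Cref{lem:subspaceavoidance} gives $|W \cap V_i| \geq |V_i|\cdot|W|/2^n$. Summing the per-subspace contributions, regardless of overlaps,
\[
\mu_1(W) \;=\; \sum_{i=1}^m \frac{|W \cap V_i|}{2m|V_i|} \;\geq\; \frac{(m-\A)\,|W|}{2m\cdot 2^n}.
\]
Plugging both bounds into $(1-4\epsilon)\mu_1(W) \leq 4\epsilon\mu_0(W)$ and using that the hypothesis on $\epsilon$ forces $\epsilon < 1/8$ (so $1-4\epsilon \geq \tfrac{1}{2}$) yields $\epsilon \geq \frac{m-\A}{8m}\cdot\frac{|f^{-1}(0)|}{2^n}$, contradicting the hypothesis.

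The single nonroutine step is choosing $\mu_1$ uniformly one-per-subspace rather than uniformly over the union: this ensures that each of the $\geq m-\A$ guaranteed hits contributes an independent $|W|/(2m\cdot 2^n)$ share to $\mu_1(W)$, oblivious to how heavily the $V_i$'s overlap or to their sizes. This is also the only place where the dual subspace design structure enters; everything else is a mechanical rearrangement inside the corruption inequality.
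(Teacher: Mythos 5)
Your proof is correct and follows essentially the same route as the paper: the same hard distribution (uniform mixture of uniform-on-$f^{-1}(0)$ and the one-per-subspace mixture), the same corruption bound, and the same use of \Cref{lem:subspaceavoidance} and \Cref{cor:dualdesignsashittingsets}. The only cosmetic difference is that you bound $\mu_1(W)$ directly from below via the $\geq m-\A$ hitting subspaces, whereas the paper runs the contrapositive (averaging shows $>\A$ subspaces must be disjoint from $W$, forcing $\codimension(W) > s$); these are the same counting argument read in opposite directions.
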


\begin{proof}
    Consider the distribution $\mu$ defined over the inputs of $f$ as follows.
    \begin{itemize}
        \item Sample $z \sim_{\mathsf{unif}} \zone$.
        \item If $z=0$, output a uniformly random input from $f^{-1}(0)$.
        \item Otherwise, sample $V \sim_{\mathsf{unif}} \mV$.
        \item Output a uniformly random input from $V$.
    \end{itemize}

    Assuming that $f$ is computed by an $\epsilon$-error cost $c$ RPDT, \Cref{lem:corruptionrpdt} implies the existence of a subspace $W$ such that
    \begin{itemize}
        \item $\mu(W \cap f^{-1}(1)) \leq 4\epsilon \mu(W)$ and
        \item $\codimension(W) \leq c$.
    \end{itemize}

    Assume we have a $W$ such that $\mu(W \cap f^{-1}(1)) \leq 4\epsilon \mu(W)$. This means that $\mu(W \cap f^{-1}(1)) \leq \frac{4\epsilon}{1-4\epsilon} \mu(W \cap f^{-1}(0))$. We also know the following from the definition of $\mu$.
    \begin{align*}
        \mu(W \cap f^{-1}(1)) &= \half \cdot \frac{1}{|\mV|} \sum_{V \in \mV} \frac{|W \cap V|}{|V|}\\
        \mu(W \cap f^{-1}(0)) &= \half \cdot \frac{|W \cap f^{-1}(0)|}{|f^{-1}(0)|} \leq \half \cdot \frac{|W|}{|f^{-1}(0)|}
    \end{align*}
    Putting these together, we get that \[ \frac{1}{|\mV|} \sum_{V \in \mV} \frac{|W \cap V|}{|V|} \leq \frac{4\epsilon}{1-4\epsilon} \frac{|W|}{|f^{-1}(0)|}. \]

    Now if $\epsilon < \frac{m-\A}{8m} \frac{|f^{-1}(0)|}{2^{n}} \leq \frac{1}{8}$, then $\frac{4\epsilon}{1-4\epsilon} < \frac{m-\A}{m} \frac{|f^{-1}(0)|}{2^{n}}$.
    This implies that less than $m-\A$ subspaces of $\mV$ can satisfy $\frac{|W \cap V|}{|V|} \geq \frac{|W|}{2^{n}}$, and hence more than $\A$ of them \emph{must} satisfy $\frac{|W \cap V|}{|V|} < \frac{|W|}{2^{n}}$. This means that $W \cap V = \emptyset$ (\Cref{lem:subspaceavoidance}).
    In other words, $W$ is an affine subspace that managed to evade more than $\A$ subspaces of $\mV$. But by \Cref{cor:dualdesignsashittingsets}, if $W$ is of codimension at most $s$, then it is disjoint from at most $\A$ subspaces of $\mV$. So $W$ must be of codimension more than $s$.

    Hence the codimension of $W$, and thereby the cost of the RPDT, is at least $s$.
\end{proof}

\begin{remark}
    The above proof would also work for any union of affine subspaces which forms a hitting set for the set of all large affine subspaces the way that the dual subspace design does.
\end{remark}


\subsection{Narrowing the gap between RPDT complexity and approximate sparsity to cubic}

In this section, we instantiate Theorem \ref{thm:dualsubspacehardness} with random subspaces to get a mere cubic gap between RPDT complexity and approximate sparsity. It is known that there are efficient probabilistic constructions of subspace designs. We go through such a construction here, and use it to show our main theorem.

\begin{theorem}\label{thm:randomspaces}\label{cor:separation}
    Let $m = 100n$. Let $V_1, V_2, \dots, V_m$ be subspaces of $\zone^n$ chosen independently and uniformly at random from the set of subspaces of dimension $2n/5$. With probability $1-o(1)$ the following two statements are true.
    \begin{itemize}
        \item $\mV = \bra{V_1,\dots,V_m}$ forms an $(n/5,m/10)$-dual subspace design.
        \item Every pair of subspaces in $\mV$ intersects trivially.
    \end{itemize}
\end{theorem}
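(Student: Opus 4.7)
My plan is to establish the two conclusions separately and then intersect them; each is a high-probability event, so their conjunction still holds with probability $1-o(1)$.

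For the pairwise trivial intersection property, I handle each pair by \Cref{lem:subspacedisj}. Fixing any pair $(V_i, V_j)$ and conditioning on $V_i$, the subspace $V_j$ is uniform of dimension $2n/5$, so \Cref{lem:subspacedisj} gives $\Pr[V_i \cap V_j \neq \{0\}] \leq n \cdot 2^{2n/5 + 2n/5 - n} = n \cdot 2^{-n/5}$. A union bound over the $\binom{m}{2} = O(n^2)$ pairs yields a failure probability of $O(n^3 \cdot 2^{-n/5}) = o(1)$.

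For the dual subspace design property, I first unfold the definition: I want the collection of duals $\{V_i^* : i \in [m]\}$ to form an $(n/5, m/10)$-subspace design, where each $V_i^*$ is uniform of dimension $3n/5$ (the map $V \mapsto V^*$ is a bijection between $2n/5$-dimensional and $3n/5$-dimensional subspaces, and carries the uniform measure to the uniform measure by symmetry). Equivalently, for every subspace $T \subseteq \{0,1\}^n$ with $\dimension(T) \leq n/5$, at most $m/10$ of the $V_i^*$'s should intersect $T$ non-trivially. For a fixed such $T$, the intersection events across $i$ are mutually independent and each has probability at most $n \cdot 2^{3n/5 + n/5 - n} = n \cdot 2^{-n/5}$ by \Cref{lem:subspacedisj}. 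The expected number of ``bad'' indices is therefore $O(n^2 \cdot 2^{-n/5})$, vanishingly smaller than the threshold $m/10 = 10n$, and a standard binomial tail bound of the form $\Pr[X \geq k] \leq \binom{m}{k} p^k$ with $k = m/10$ and $p = n \cdot 2^{-n/5}$ produces a per-$T$ failure probability of $2^{-\Omega(n^2)}$.

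The final step is a union bound over all candidate $T$. The number of subspaces of $\{0,1\}^n$ of dimension at most $n/5$ is at most $n \cdot \binom{n}{n/5}_2 \leq 2^{4n^2/25 + O(\log n)}$ via the Gaussian binomial coefficient, so the total failure probability for the dual subspace design property is $2^{-\Omega(n^2) + 4n^2/25 + O(\log n)} = o(1)$. The main delicate point, and the only place where the specific constants in the theorem statement really come into play, is verifying that the binomial-tail exponent (of order $2n^2$) comfortably dominates the enumeration exponent $4n^2/25$; this is exactly what the choices $\dimension(V_i) = 2n/5$, $m = 100n$, $s = n/5$, and design parameter $\A = m/10$ are tuned to guarantee, while still keeping the resulting design strong enough to feed into \Cref{thm:dualsubspacehardness} and produce a linear RPDT lower bound.
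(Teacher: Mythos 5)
Your proof is correct and follows the same strategy as the paper's: bound the per-pair trivial-intersection failure via \Cref{lem:subspacedisj} and union bound over $\binom{m}{2}$ pairs, and for the design property combine a binomial tail bound $\binom{m}{m/10}p^{m/10}$ with $p = n2^{-n/5}$ against a union bound over subspaces. The only cosmetic difference is that you phrase the second part in the dual picture (enumerating small subspaces $T$ and using the tight Gaussian-binomial count $2^{4n^2/25 + O(\log n)}$) whereas the paper works with the primal large subspaces $W$ and the looser count $(2^n)^{4n/5}=2^{4n^2/5}$; both comfortably clear the $2^{-2n^2+o(n^2)}$ tail.
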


\begin{proof}
    Let $W$ be a fixed affine subspace of $\zone^n$ of dimension $4n/5$. Let $\mV = \bra{V_1, V_2, \cdots, V_m}$ be subspaces of $\zone^n$ chosen independently and uniformly at random from the set of subspaces of dimension $2n/5$.

    Since the duals of $W$ and $V_1$ have dimension $3n/5$ and $n/5$ respectively, the probability that $W$ and $V_1$ are independent is at least $1 - n{2^{-n/5}}$ (\Cref{lem:subspacedisj}). This is independently true of $W$ and each $V \in \mV$. The probability that $W$ is not independent with \emph{at least} $m/10$ of the $m$ subspaces is at most $\binom{m}{m/10} (n 2^{-n/5})^{m/10}$.

    Since the number of subspaces of dimension $4n/5$ is at most $(2^n)^{4n/5} = 2^{4n^2/5}$, the probability that there exists such a subspace $W$ that is not independent with at least $m/10$ of the subspaces in $\mV$ is at most $2^{4n^2/5} \binom{m}{m/10} (n 2^{-n/5})^{m/10}$.
    
    Setting $m = 100n$, this upper bound is at most $2^{.8n^2 + 100n + 10n \log n - 2n^2} = o(1)$.

    Hence with high probability, $\mV$ is an $(n/5,m/10)$-dual subspace design.
    
    Let $f$ be defined as in the theorem statement. Note that since $V_1$ and $V_2$ are random subspaces of dimension $2n/5$, the probability that they intersect only at $0$ is at least $1-n2^{-n/5}$. The probability that any two subspaces in $\mV$ intersect at more than just $0$ is at most $\binom{m}{2} n 2^{-n/5} = o(1)$.
\end{proof}

\separation*

\begin{proof}
    We know from \Cref{thm:randomspaces} that with probability $1-o(1)$ the set $\mV$ forms an $(n/5,m/10)$-dual subspace design. We also can trivially lower bound $|f^{-1}(0)|/2^n$ by $1 - m2^{-3n/5}$. Since $\mV$ is an $(n/5,m/10)$-dual subspace design, we can conclude from \Cref{thm:dualsubspacehardness} that for $\epsilon \leq 1/10$, $\RPDT{\epsilon}(f) \geq n/5$. 

    We also know from \Cref{thm:randomspaces} that with probability $1-o(1)$, every pair of subspaces from $\mV$ intersects trivially. When this event holds, $f$ can be represented as $\sum_{V \in \mV} \mathds{1}_{V} - (m-1) \mathds{1}_{V_0}$ where $V_0 = \bra{0}$ is the trivial subspace of dimension $0$. Since the spectral norm of a subspace is equal to $1$, the spectral norm of $f$ is upper bounded by $m + m-1 < 2m$. Using \Cref{thm:grolmusz}, this also implies that $\asparsity{f}{\epsilon} \leq O(m^2 n/\epsilon^2) = O(n^3)$ for any constant $\epsilon$.
    
    This concludes the proof of the merely cubic gap.
\end{proof}

\subsection{On Extending this to Communication}\label{subsec:communication}\label{sec:comm}


In this section, we state a plausible conjecture that would imply a lower bound on the randomized communication complexity of XOR compositions of our functions. The proof of this implication is in \Cref{appendix:commlowerbound}.

In the RPDT lower bound, we showed that in order for an affine subspace to avoid most of the subspaces of a dual subspace design, the codimension of the affine subspace needs to be large. We could hope for a similar statement in the communication world: For a rectangle to put very little mass on most of the subspaces making up a dual subspace design (i.e., puts very little mass on inputs $(x,y)$ such that $x \oplus y$ lies in the subspaces), the mass of the rectangle must be $2^{-\Omega(n)}$. One particularly neat conjecture that would imply that statement is the following, in which $\mU_k$ denotes the uniform distribution over $k$ elements.

\begin{conjecture}\label{conj:entropyloss}
    There exist constants $0 < \alpha < 1$, $\beta > 0$ and $k \geq 1$ such that the following holds. Let $\mV = \bra{V_1, \dots, V_m}$ be an $n$-dimensional $(s,\A)$-dual subspace design. Let $B_i$ be the coset map of $V_i$. Let $X$ be a random variable over $\zone^n$ such that $\| B_i(X) - \mU_{2^{\codimension(V_i)}} \|_1 \geq \alpha$ for more than $k\A$ values of $i \in [m]$. Then $H(X) \leq n - \beta s$.
\end{conjecture}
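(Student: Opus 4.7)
The plan is to proceed by contrapositive: assume $H(X) > n - \beta s$, and show that only few of the $V_i$'s can have $\|B_i(X) - \mU_{2^{\codimension(V_i)}}\|_1 \geq \alpha$. Setting $\phi(v) \defeq \Eop[(-1)^{\dotp{v}{X}}]$ for $v \in \zone^n$, the distribution of $B_i(X)$ on $\zone^{\codimension(V_i)}$ has Fourier coefficients $\phi(v)/2^{\codimension(V_i)}$ indexed by $v \in V_i^\perp$. Cauchy--Schwarz (equivalently, Pinsker combined with Parseval) then yields, for every $i \in I \defeq \bra{i : \|B_i(X) - \mU_{2^{\codimension(V_i)}}\|_1 \geq \alpha}$,
\[
  \sum_{v \in V_i^\perp \setminus \bra{0}} \phi(v)^2 \;\geq\; \alpha^2,
\]
so the ``bad'' indices are precisely those whose duals capture a significant portion of the Fourier mass of $X$.

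Next, introduce the large-spectrum set $\Lambda_\tau \defeq \bra{v \in \zone^n \setminus \bra{0} : |\phi(v)| \geq \tau}$ and the candidate witness subspace $W \defeq \spa(\Lambda_\tau)^\perp$. By \Cref{clm:subspace_design_ind}, if $\codimension(W) = \dimension(\spa(\Lambda_\tau)) \leq s$, then $V_i^\perp \cap \spa(\Lambda_\tau) = \bra{0}$ for at least $m - \A$ indices $i$. The heart of the argument has two pieces. First, a Chang-type inequality should bound $\dimension(\spa(\Lambda_\tau))$ in terms of the entropy deficit $n - H(X)$ and $\tau$; combined with $H(X) > n - \beta s$ and a choice of $\beta$ small enough in terms of $\tau$, this forces $\codimension(W) \leq s$. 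Second, one must show that for most $i \in I$ the Fourier energy $\geq \alpha^2$ concentrated inside $V_i^\perp$ forces it to meet $\Lambda_\tau$: either it contains a single coefficient of magnitude $\geq \tau$, or the slack parameter $k$ in $|I| > k\A$ absorbs the exceptional $i$'s whose energy is spread across many tiny coefficients. Applying the dual subspace design property to $W$ then yields $|I| \leq k\A$, contradicting the initial assumption.

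The principal obstacle lies in the Chang-type step. Standard formulations of Chang's lemma control the span of the large-Fourier set in terms of collision entropy $H_2$, or equivalently $\log(1/\mu)$ for an indicator distribution of density $\mu$; both are strictly stronger than the Shannon entropy given in the hypothesis. One promising workaround is to first ``smooth'' $X$ by conditioning on a high-probability event on which its distribution is close to uniform, effectively upgrading the Shannon bound to a min-entropy bound on most of the mass, and then apply Chang. A second possibility is to run the argument at multiple thresholds $\tau_1 > \tau_2 > \cdots$ and aggregate the resulting dimensional contributions dyadically. A third, more combinatorial route bypasses Chang entirely and directly combines the per-index entropy deficits $\codimension(V_i) - H(B_i(X)) \geq \alpha^2/2$ (from Pinsker) via the entropy chain rule, leveraging the near-independence of the $B_i$'s guaranteed by the design condition to keep the combined deficit from collapsing once the number of indices exceeds $k\A$.
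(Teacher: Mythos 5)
This statement is a \emph{conjecture} --- the paper does not prove it, and in fact Section~\ref{subsec:communication} explicitly discusses why it resists proof. So there is no paper proof to compare against; the question is whether your sketch would close the gap. It would not, and the reasons are close to ones the authors themselves raise.

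Your first reduction is fine: with $\phi(v) = \Eop[(-1)^{\dotp{v}{X}}]$, the coset-map distribution $B_i(X)$ has Fourier spectrum $\bra{\phi(v)}_{v \in V_i^\perp}$, and $\| B_i(X) - \mU \|_1 \geq \alpha$ together with Cauchy--Schwarz/Parseval gives $\sum_{v \in V_i^\perp \setminus \bra{0}} \phi(v)^2 \geq \alpha^2$. The problem is the next step. An $\ell_2$ mass of $\alpha^2$ spread over $V_i^\perp$ (which has $2^{\codimension(V_i)}$ elements, e.g.\ $2^{3n/5}$ in the random-subspace instantiation) in no way forces $V_i^\perp$ to meet the large spectrum $\Lambda_\tau$ for any constant $\tau$: the energy can sit on exponentially many coefficients each of size $\approx \alpha \cdot 2^{-\codimension(V_i)/2}$. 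Moreover this failure mode can occur for \emph{all} $i \in I$ simultaneously, so the slack parameter $k$ buys you nothing. This is exactly what the authors mean when they write that a Fourier-type approach ``doesn't give any meaningful bound (if done in a naive way at least).''

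You correctly flag the Chang-lemma obstruction yourself: Chang controls the large spectrum via $\log(1/\mu)$ (equivalently min- or collision-entropy), and Shannon entropy $H(X) > n - \beta s$ does not give that. Your proposed workarounds are speculative and do not plug the hole. In particular your third, ``combinatorial'' route --- combining per-index deficits $\codimension(V_i) - H(B_i(X)) \gtrsim \alpha^2$ via a Shearer/chain-rule argument --- is precisely the Brascamp--Lieb route the paper discusses and rejects: such inequalities can yield only an $\Omega(n/k)$ total entropy loss when the projections are to $k$ bits, and here $k = \codimension(V_i) = \Theta(n)$, giving $\Omega(1)$, not the required $\Omega(s) = \Omega(n)$. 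So the proposal identifies the right obstacles but does not overcome them; the conjecture remains open, and nothing here should be read as a proof.
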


The merely cubic gap in the RPDT world used random subspaces. So for extending it to communication, it would be okay for us to bypass dual subspace designs and prove the theorem for random subspaces instead.

\begin{conjecture}\label{conj:entropyloss_randomsubspaces}
    There exists a constant $0 < \alpha < 1, \beta > 0$ such that the following holds. Let $m=100n$. Let $V_1, V_2, \dots, V_m$ be random subspaces of $\zone^n$ of dimension $2n/5$, and let $B_1, B_2, \cdots, B_m$ be their coset maps. Let $X$ be a random variable over $\zone^n$ such that $\| B_i(X) - \mU_{2^{3n/5}} \|_1 \geq \alpha$ for at least $m/3$ values of $i \in [m]$. Then with high probability, $H(X) \leq n - \beta n$.
\end{conjecture}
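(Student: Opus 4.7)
The plan is to convert the hypothesis into a Fourier / $\ell^2$ statement and then peel off a high-min-entropy piece of $X$ to which the leftover hash lemma applies. By Cauchy--Schwarz (using $N = 2^{3n/5}$) and Parseval on the quotient $\zone^n / V_i$, the bound $\|B_i(X) - \mU_{2^{3n/5}}\|_1 \geq \alpha$ translates into
\[ \sum_{S \in V_i^\perp \setminus \{0\}} \tilde{p}(S)^2 \ \geq\ \alpha^2, \]
where $\tilde{p}(S) \defeq \E[\chi_S(X)]$. Thus the hypothesis says that $X$ places Fourier $\ell^2$-mass at least $\alpha^2$ on each of at least $m/3$ of the random $3n/5$-dimensional subspaces $V_i^\perp$.

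Next, I would split $X$ into a \emph{heavy} part and a \emph{flat} part. Fix a small $\gamma \in (0, 2/5)$ (say $\gamma = 1/5$), let $T \defeq \{x : p(x) > 2^{-n + \gamma n}\}$, and write $\delta \defeq \Pr[X \in T]$, so that $|T| \leq 2^{n - \gamma n}$ and the conditional source $X^{\mathrm{flat}} \defeq (X \mid X \notin T)$ has min-entropy at least $n - \gamma n - 1$. Since random $\F_2$-linear maps form a $2$-universal hash family, the leftover hash lemma gives $\E_V \|B_V(X^{\mathrm{flat}}) - \mU_{2^{3n/5}}\|_1 \leq 2^{-\Omega(n)}$. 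A Markov step followed by a Chernoff bound over the independent $V_i$'s then shows that, with probability $1 - e^{-\Omega(n)}$ over $V_1, \ldots, V_m$, \emph{every} $V_i$ satisfies $\|B_{V_i}(X^{\mathrm{flat}}) - \mU_{2^{3n/5}}\|_1 < \alpha/4$. Since $B_{V_i}(X)$ is the $(1-\delta,\delta)$-convex combination of $B_{V_i}(X^{\mathrm{flat}})$ and $B_{V_i}(X \mid X \in T)$, the hypothesis $\|B_{V_i}(X) - \mU_{2^{3n/5}}\|_1 \geq \alpha$ on a bad $V_i$ then forces $\delta \geq 3\alpha/8$. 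Combining with the standard bound
\[ H(X) \ \leq\ 1 + \delta \log |T| + (1-\delta)\log|T^c| \ \leq\ n - \delta \gamma n + 1 \]
delivers $H(X) \leq n - \beta n$ with $\beta$ a positive constant of order $\alpha$ (roughly $\beta = 3\alpha/40$).

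The main obstacle is uniformity in $X$: the Chernoff step above fixes $X$ before sampling the $V_i$'s, whereas the conjecture demands a single random collection $(V_i)_{i=1}^m$ that works for \emph{every} $X$ meeting the hypothesis. What is really needed is a uniform leftover-hash statement: with probability $1 - o(1)$ over iid random $V_1, \ldots, V_m$ of dimension $2n/5$, for every source $Y$ on $\zone^n$ of min-entropy at least $3n/5 + \Omega(n)$, almost all of the $B_{V_i}(Y)$ are $\alpha/4$-close to uniform. This is an extractor-style statement for the family of random linear maps, and a naive $\epsilon$-net argument over flat sources is hopeless because the net has size doubly exponential in $n$. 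The likely route is a tailored second-moment or symmetry argument exploiting the linearity and the $\mathrm{GL}_n(\F_2)$-invariance of the $V_i$'s, or an appeal to known universal-extractor results for random linear codes, quantitatively tightened to match $m = 100n$ and $o(1)$ failure probability; this uniform-over-$Y$ step is where I expect the heart of the difficulty to lie.
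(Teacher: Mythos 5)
This statement is \Cref{conj:entropyloss_randomsubspaces} of the paper, which is an open \emph{conjecture}: the authors do not prove it, they explicitly flag the obstacles to proving it, and it is the missing ingredient needed to lift their RPDT lower bound to a communication lower bound. So there is no author proof to measure your sketch against.

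Your sketch, as you yourself note in the final paragraph, stops short of a proof at exactly the crucial step. The Cauchy--Schwarz/Parseval translation, the heavy/flat decomposition with threshold $2^{-n+\gamma n}$, the $2$-universality of random coset maps, and the closing entropy accounting are all fine. But the leftover-hash-plus-Markov/Chernoff step fixes the (flat part of the) source $X$ \emph{before} sampling $V_1,\dots,V_m$, while the conjecture requires a single draw of $V_1,\dots,V_m$ that simultaneously controls every admissible $X$; a union bound over sources is hopeless since there are doubly-exponentially many. This is precisely the difficulty the authors point to in Section~\ref{subsec:communication}: they remark that the family of distributions is doubly exponential in $n$, so the conjectures are a ``leap of faith,'' and that the naive Fourier/$\ell_2$ route you open with ``doesn't give any meaningful bound.'' Moreover, the ``uniform leftover hash'' statement you isolate --- that a fixed $\mathrm{poly}(n)$-size family of random coset maps behaves like an extractor for \emph{every} high-min-entropy source --- is, for this highly structured linear hash family, essentially of the same strength as the conjecture itself; general seeded-extractor constructions with $O(\log n)$ seed do achieve this, but they are not coset maps, and there is no reason a polynomially small random family of linear maps should share the property. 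So the proposal is a clean restatement of the conjecture as an extractor-type claim for random linear maps, but it is not a proof, and the one step you leave open is the entire difficulty.
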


First of all note that the conjectures are true when $X$ is the uniform distribution over an affine subspace. To see this, suppose $X$ is the uniform distribution over an affine subspace $W$. $H(X) \ge n - s$ is the same as saying that $\codimension(W) \le s$. Then by Claim \ref{clm:subspace_design_ind}, for at least $m-h$ of the subspaces $V_1,\ldots, V_m$, $V_i$ and the dual space of $W$ are independent, which implies that $B_i(X)$ will be exactly uniform ($\mU_{2^{\codimension(V_i)}}$).

We discuss now why the Conjectures \ref{conj:entropyloss} and \ref{conj:entropyloss_randomsubspaces} appear to be a bit tricky to prove. While the conjectures are true for affine subspaces, the number of distributions (or even the number of subsets of $\zone^n$) are much larger (doubly exponential in $n$), so the conjectures are a leap of faith in this sense. But we haven't been able to come up with counterexamples and it would be very interesting to do so. The conceptual way to view the conjectures, e.g. Conjecture \ref{conj:entropyloss_randomsubspaces} to be concrete, is that if a random variable $X$ has the property that when projected down to $2n/5$ bits in various ways it loses $\Omega(1)$ bits of entropy, then $X$ overall loses $\Omega(n)$ bits of entropy. Shearer's lemma talks about these kind of statements. While in Shearer's lemma, the projections are onto subcubes, there are generalizations called Brascamp-Lieb inequalities which talk about more general projections (e.g. see \cite{christ2013optimal}). However, the Brascamp-Lieb inequalities can at best guarantee an $\Omega(n/k)$-bit entropy loss in $X$ if there is an $\Omega(1)$-bit entropy loss while projecting $X$ to $k$ bits in various ways. What we want is much stronger. This is one difficulty.

The other difficulty is that a Fourier type approach doesn't seem to work either. One can control $\| B_i(X) - \mU_{2^{\codimension(V_i)}} \|_1$ by bounding the $\ell_2$ distance and then trying to bound the Fourier coefficients of the distribution of $X$ on the dual space of $V_i$. But this doesn't give any meaningful bound (if done in a naive way at least).

We now state the lower bound on the randomized communication complexity of a dual subspace design composed with $\XOR$ that we get assuming \Cref{conj:entropyloss}. For a set of subspaces in $n$ dimensions $\mV = \bra{V_1, V_2, \dots, V_m}$, let $f_{\mV}$ be the function on $n$ bits that outputs $1$ on inputs in $\cup_{V \in \mV} V$.

\begin{theorem}\label{thm:commlowerbound}[Proof in \Cref{appendix:commlowerbound}]
    Let us assume \Cref{conj:entropyloss} holds with constants $\alpha, \beta$ and $k$. Let $\mV = \bra{V_1, V_2, \dots, V_m}$ be an $n$-dimensional $(s,\A)$-dual subspace design and define $\gamma$ so that $|\cup_{V \in \mV} V| = \gamma 2^n$.  Let $F = f_{\mV} \circ \XOR$. For $\epsilon < \frac{(1-\alpha)^2}{4} \frac{m-2k\A}{8m}(1-\gamma)$, the $\epsilon$-error randomized communication complexity of $F$ is at least $\beta s + \log(1-\gamma)$.
\end{theorem}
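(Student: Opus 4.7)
The plan is to mimic, at the communication level, the RPDT argument behind \Cref{thm:dualsubspacehardness}. I would apply \Cref{lem:corruption} with essentially the same structured distribution $\nu$ used there: half of $\nu$ is uniform on $F^{-1}(0)$, and the other half is the mixture, over $i\in[m]$, of the joint distribution obtained by sampling $v\in V_i$, $x\in \zone^n$ uniformly and setting $y=x\oplus v$. An $\epsilon$-error protocol of cost $c$ then yields a rectangle $R=A\times B$ with $\nu(R\cap F^{-1}(1))\leq 4\epsilon\,\nu(R)$ and $\nu(R)\geq 2^{-c-3}$.

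Letting $B_i \defeq \cosetmap_{V_i}$ and $\mu_A,\mu_B$ denote the uniform distributions on $A,B$, I would expand
\[
\nu(R\cap F^{-1}(1))=\frac{|A|\,|B|}{2m\cdot 2^{2n}}\sum_{i=1}^m 2^{\codimension(V_i)}\,\langle B_i(\mu_A),B_i(\mu_B)\rangle
\]
and use the trivial bound $\nu(R\cap F^{-1}(0))\leq \frac{|A|\,|B|}{2(1-\gamma)\cdot 2^{2n}}$. Plugging these into the ratio form $\nu(R\cap F^{-1}(1))\leq \tfrac{4\epsilon}{1-4\epsilon}\nu(R\cap F^{-1}(0))$ and using the hypothesized bound on $\epsilon$, a Markov-style estimate produces a set $I\subseteq[m]$ of size exceeding $2kh$ on which $2^{\codimension(V_i)}\langle B_i(\mu_A),B_i(\mu_B)\rangle<(1-\alpha)^2$, i.e.\ the collision probability $\langle B_i(\mu_A),B_i(\mu_B)\rangle$ lies below the uniform value $2^{-\codimension(V_i)}$ by a factor of at least $(1-\alpha)^2$.

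The crucial and most delicate step is to turn this collision deficit into an $\ell_1$ deviation: namely, that for every $i\in I$ at least one of $B_i(\mu_A),B_i(\mu_B)$ is at $\ell_1$-distance $\geq \alpha$ from uniform on $\zone^{\codimension(V_i)}$. A direct Cauchy--Schwarz only extracts a deficit of order $\alpha/2^{\codimension(V_i)}$, so this key lemma must exploit additional structure---most naturally the convolution identity that the projection of $X\oplus Y$ under $B_i$ equals $B_i(\mu_A)\ast B_i(\mu_B)$, together with the fact that $\mu_A,\mu_B$ are uniform on sets (so $B_i(\mu_A),B_i(\mu_B)$ are bounded-density distributions with respect to uniform). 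A Fourier-analytic or maximal-coupling argument seems necessary here, and this is where I expect most of the work to lie. With the lemma in hand, a pigeonhole step hands us a subset $I'\subseteq I$ of size exceeding $kh$ on which, without loss of generality, $\|B_i(\mu_A)-\mU_{2^{\codimension(V_i)}}\|_1\geq \alpha$.

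Applying \Cref{conj:entropyloss} to $X$ uniform on $A$ then gives $\log|A|=H(X)\leq n-\beta s$. On the other hand, combining $\nu(R)\geq 2^{-c-3}$ with the upper bound on $\nu(R\cap F^{-1}(0))$ forces $|A|\,|B|\geq (1-\gamma)(1-4\epsilon)\cdot 2^{2n-c-2}$; together with the trivial $|B|\leq 2^n$ and the entropy-based bound $|A|\leq 2^{n-\beta s}$, this yields $c\geq \beta s+\log(1-\gamma)-O(1)$, matching the theorem.
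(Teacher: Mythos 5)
Your overall strategy is exactly the paper's: same corruption distribution $\nu$, same application of \Cref{lem:corruption}, same decomposition of $\nu(R\cap F^{-1}(1))$ in terms of collision probabilities of the coset-map pushforwards, same Markov step to isolate $>2k\A$ bad subspaces, same pigeonhole to fix one side (WLOG $A$), same invocation of \Cref{conj:entropyloss}, and essentially the same endgame arithmetic. The place you stall is the one you flag yourself: converting the collision deficit
\[
\langle B_i(\mu_A),B_i(\mu_B)\rangle < \frac{(1-\alpha)^2}{4}\cdot 2^{-\codimension(V_i)}
\]
into $\max\bigl(\|B_i(\mu_A)-\mU\|_1,\ \|B_i(\mu_B)-\mU\|_1\bigr)\geq\alpha$. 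You are right that a naive $\ell_2$/Cauchy--Schwarz bound is too weak, but you are wrong that the step requires extra structure of the pushforwards (and your assertion that $B_i(\mu_A)$, $B_i(\mu_B)$ are bounded-density relative to uniform is actually false when $|A|$ or $|B|$ is small, since the density ratio can be as large as $2^n/|A|$). Neither Fourier analysis nor a convolution identity is needed.

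The missing idea is a simple threshold argument that works for \emph{arbitrary} distributions $p,q$ on $[N]$. Suppose $\sum_z p(z)q(z) < \frac{(1-\alpha)^2}{4N}$, and let $S=\{z : q(z)\geq\frac{1-\alpha}{2N}\}$. Then
\[
p(S)\cdot\frac{1-\alpha}{2N}\ \leq\ \sum_{z\in S}p(z)q(z)\ \leq\ \sum_z p(z)q(z)\ <\ \frac{(1-\alpha)^2}{4N},
\]
so $p(S)<\frac{1-\alpha}{2}$. On the other hand $q(\overline{S})<N\cdot\frac{1-\alpha}{2N}=\frac{1-\alpha}{2}$, so $q(S)>1-\frac{1-\alpha}{2}$. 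Hence the total variation distance between $p$ and $q$ exceeds $\alpha$, i.e.\ $\|p-q\|_1>2\alpha$, and by the triangle inequality one of $\|p-\mU\|_1$, $\|q-\mU\|_1$ exceeds $\alpha$. Plugging this in with $N=2^{\codimension(V_i)}$ completes the step you were missing. With that in hand, your sketch of the remainder (applying the conjecture to conclude $|A|\leq 2^{n-\beta s}$, then combining with $\nu(R)\geq 2^{-c-3}$ and $|B|\leq 2^n$ to derive $c\geq\beta s+\log(1-\gamma)-O(1)$) is correct and matches the paper.
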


Given this lower bound, we would want to apply it to get a merely cubic gap between randomized communication complexity and approximate rank along the lines of \Cref{cor:separation}.

\begin{corollary}
    Let $\mV = \bra{V_1, V_2, \dots, V_m}$ be an $(n/5,m/20k)$-dual subspace design with (1) $m = 200kn$, (2) each subspace having dimension $2n/5$ and (3) every pair of subspaces intersecting trivially. Let $F = f_{\mV} \circ \XOR$. Then assuming \Cref{conj:entropyloss},
    \begin{itemize}
        \item The $1/10$-error randomized communication complexity of $F$ is $\Omega(n)$.
        \item $\mathsf{rank}_{1/10}(F) = O(n^3)$.
    \end{itemize}
\end{corollary}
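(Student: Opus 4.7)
The plan is to derive the corollary as two essentially independent consequences: the communication lower bound from a parameter-checking application of \Cref{thm:commlowerbound} (which already swallows all the work under \Cref{conj:entropyloss}), and the approximate rank upper bound from the same spectral-norm plus Grolmusz argument that was used in the proof of \Cref{thm:separation}. Neither half of the corollary requires new ideas; the corollary is essentially asserting that the parameter regime $(s, \A, m, \dim V_i) = (n/5,\ m/(20k),\ 200kn,\ 2n/5)$ is simultaneously feasible and balanced enough to make both halves of \Cref{thm:commlowerbound}'s hypothesis kick in with a constant error threshold.

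For the communication lower bound, I would instantiate \Cref{thm:commlowerbound} and verify the numerical hypothesis. With $m = 200kn$ and $\A = m/(20k)$ we have $m - 2k\A = 200kn - 20kn = 180kn$, so $\frac{m-2k\A}{8m} = \frac{9}{80}$, a constant independent of $n$. Next, since each $V_i$ has dimension $2n/5$, a union bound gives $|\bigcup_V V| \leq m \cdot 2^{2n/5} = 200kn \cdot 2^{2n/5}$, so $\gamma \leq 200kn \cdot 2^{-3n/5} = o(1)$ and in particular $1 - \gamma \geq 1/2$ for $n$ large enough. Thus the threshold on $\epsilon$ demanded by \Cref{thm:commlowerbound} is at least $\frac{(1-\alpha)^2}{4} \cdot \frac{9}{80} \cdot \frac{1}{2}$, a positive constant determined by $\alpha$ alone; in particular it exceeds $1/10$ for large $n$. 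The conclusion of the theorem then yields communication complexity at least $\beta s + \log(1-\gamma) = \beta n/5 - o(1) = \Omega(n)$, as required.

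For the approximate-rank bound, I would copy the Fourier-analytic argument from the proof of \Cref{thm:separation} verbatim. The trivial-intersection hypothesis implies
\[ f_{\mV} \;=\; \sum_{V \in \mV} \mathds{1}_V \;-\; (m-1)\,\mathds{1}_{\{0\}}, \]
so by the triangle inequality and the fact that every subspace indicator has spectral norm $1$, we get $\spectral{f_{\mV}} \leq 2m - 1 = O(n)$. Plugging this into \Cref{thm:grolmusz} with $\epsilon = 0$ and $\delta$ a small constant bounded away from $1/10$ gives $\asparsity{f_{\mV}}{\delta} = O(m^2 n) = O(n^3)$. Since $\mathsf{rank}_{1/10}(F) \leq \asparsity{f_{\mV}}{1/10}$ for $F = f_{\mV} \circ \XOR$, the second bullet follows.

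There is no serious obstacle in executing this plan; the proof is really just parameter bookkeeping on top of \Cref{thm:commlowerbound}. The only thing to double-check is that the $\epsilon$-threshold in \Cref{thm:commlowerbound} remains a positive constant in our regime, which is exactly what the computation of $(m-2k\A)/(8m) = 9/80$ and $1-\gamma = 1-o(1)$ above ensures. Of course, the genuine difficulty is hidden one level deeper, in \Cref{conj:entropyloss} itself, which the statement of the corollary assumes outright.
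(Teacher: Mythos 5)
Your argument is essentially the paper's, and the approximate-rank half is fine, but there is one concrete numerical error in the communication half that leaves a gap. You claim the $\epsilon$-threshold $T = \frac{(1-\alpha)^2}{4}\cdot\frac{9}{80}\cdot(1-\gamma)$ ``exceeds $1/10$ for large $n$,'' but this cannot hold for any admissible $\alpha\in(0,1)$: as $n\to\infty$ the threshold tends to $\frac{9(1-\alpha)^2}{320}$, and $\frac{9(1-\alpha)^2}{320} > \frac{1}{10}$ would force $(1-\alpha)^2 > 32/9$, i.e.\ $\alpha < -0.88$. Even at $\alpha=1/2$ the threshold is about $9/2560 \approx 0.0035 \ll 1/10$. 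So \Cref{thm:commlowerbound} gives you an $\Omega(n)$ lower bound only for the $\epsilon$-error complexity at some small constant $\epsilon < T < 1/10$, not directly at error $1/10$. The paper bridges this with one extra sentence: error reduction (majority vote over $O(1)$ independent repetitions) shows $\Rcc{\epsilon}(F) \leq O(1)\cdot\Rcc{1/10}(F)$ for constant $\epsilon$, hence $\Rcc{1/10}(F) = \Omega(n)$ as well. Without that step your proof establishes a lower bound only at an unspecified smaller error parameter, which is not what the corollary asserts. Everything else — the computation $m-2k\A = 9m/10$, the bound $\gamma = o(1)$, the decomposition $f_{\mV}=\sum_V \mathds{1}_V-(m-1)\mathds{1}_{\{0\}}$ with spectral norm at most $2m$, and the Grolmusz step giving $O(m^2 n)=O(n^3)$ approximate sparsity — matches the paper and is correct.
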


\begin{proof}
    The size of $F^{-1}(1)$ would be at most $2^n \sum_{V \in \mV} |V| \leq 2^{n+2n/5} m = o(2^{2n})$. We can then use \Cref{thm:commlowerbound} to get a lower bound of $\beta n/5$ when $\epsilon < \frac{(1-\alpha)^2}{4} \frac{m-2k\A}{8m} \frac{|F^{-1}(0)|}{2^{2n}}$, which is a constant. Since we can use error reduction to go from error $1/10$ to any small constant error with only a constant blow-up in cost, the $1/10$-error randomized communication complexity is also $\Omega(n)$.
    
    The $\epsilon$-approximate rank of $f \circ \XOR$ is known to be at most the $\epsilon$-approximate sparsity of $f$. As analyzed in \Cref{cor:separation}, $\spectral{f_{\mV}} \leq 2m$ and $\asparsity{f_{\mV}}{1/10} \leq O(m^2n) = O(n^3)$ and hence $\mathsf{rank}_{1/10}(F) \leq O(n^3)$.
\end{proof}

The existence of a dual subspace design as required in the previous corollary follows by changing \Cref{thm:randomspaces} to set $m = 200kn$. The proof of the modified statement is syntactically identical to the proof of the original statement.

\section{Conclusion and open problems}\label{sec:conclusion}

We come up with new and improved refutations of the query complexity analogue of the log-approximate-rank conjecture, following the work of Chattopadhyay, Mande and Sherif \cite{CMS20}. Our examples are derived from subspace designs, a concept which has previously found applications in coding theory and pseudorandomness \cite{GX12, GK16, GXY17}. A lot of interesting open problems arise from our work, some of which we mention below.

\begin{enumerate}
    \item \textbf{(Communication complexity of XOR composed subspace designs).} What is the randomized communication complexity of dual subspace designs composed with XOR (as studied in Section \ref{subsec:communication})? A lower bound would follow from Conjecture \ref{conj:entropyloss}. If Conjecture \ref{conj:entropyloss} is false, is there an alternate way to prove the communication lower bound? Since we already have an RPDT lower bound for dual subspace designs, these functions provide a interesting class of functions to study randomized XOR lifting. Currently we cannot even prove that this class of functions do not have large monochromatic rectangles.
    \item \textbf{(Communication complexity of XOR composed random subspaces).} What is the randomized communication complexity of random subspaces composed with XOR? A lower bound would follow from Conjecture \ref{conj:entropyloss_randomsubspaces} which follows from Conjecture \ref{conj:entropyloss}. Even if Conjecture \ref{conj:entropyloss} is false, Conjecture \ref{conj:entropyloss_randomsubspaces} could still be true or perhaps easier to prove. If even Conjecture \ref{conj:entropyloss_randomsubspaces} is false, is there an alternate way to prove the communication lower bound, perhaps adapting the technique of \cite{HHL16} to the randomized communication setting? Here also we cannot prove that there are no large monochromatic rectangles.
    \item \textbf{(Quantum communication complexity of XOR composed subspace designs).} What is the quantum communication complexity of dual subspace designs composed with XOR? Is there a function in this class which has polylogarithmic quantum communication complexity? 
    \item \textbf{(RPDT and approximate sparsity).} What is the optimal gap between RPDT complexity and approximate sparsity? We give examples where the RPDT complexity is at least cube root of the approximate sparsity and also RPDT complexity is easily seen to be at most the approximate sparsity.
\end{enumerate}

\bibliographystyle{plainurl}
\bibliography{bibliography.bib}

\appendix
\section{Facts About Subspaces}\label{appendix:subspacefacts}

\subspacedisj*

\begin{proof}
    Let us generate $T$ by choosing $d_2$ vectors $\bra{v_1,\dots,v_{d_2}}$, each vector independent of the previous ones, in order to form a basis for $T$. The subspace $S$ intersects $T$ trivially if and only if for all $i \in [d_2]$, $v_i \not\in \spa(\bra{v_j}_{j < i} \cup S)$. We call these events $E_1, \dots, E_{d_2}$. When choosing $v_i$ to add to the basis for $T$, there are $2^n - 2^{i-1}$ choices, since $|\spa(\bra{v_j}_{j < i})| = 2^{i-1}$. Conditioned on $E_1, \dots, E_{i-1}$, we also know that $|\spa(\bra{v_j}_{j < i} \cup S)| = 2^{i-1+d_1}$. The probability of $E_i$ occurring is \[ \frac{|\left(\zone^n \setminus \spa(\bra{v_j}_{j < i}) \right) \setminus \spa(\bra{v_j}_{j < i} \cup S)|}{|\zone^n \setminus \spa(\bra{v_j}_{j < i})|} = \frac{|\zone^n \setminus \spa(\bra{v_j}_{j < i} \cup S)|}{|\zone^n \setminus \spa(\bra{v_j}_{j < i})|}.\] We can then calculate the probability of $S \cap T = \emptyset$ as

    \begin{align*}
    \Pr\left[\bigcap_{i \in [d_2]} E_i\right] &= \prod_{i = 1}^{d_2} \Pr\left[ E_i ~ \mid ~ E_1, \cdots, E_{i-1} \right] = \prod_{i = 1}^{d_2} \frac{2^n - 2^{d_1+i-1}}{2^n-2^{i-1}}\\
    &\geq \left(1 - \frac{2^{d_1+d_2}}{2^n}\right)^{d_2} \geq 1 - \frac{d_2}{2^{n-d_1-d_2}}.
    \end{align*}
\end{proof}

\subspaceavoidance*

\begin{proof}
    Let $\bra{\dotp{v_i}{x} = a_i}_{i \in [k]}$ be the constraints defining the affine subspace $W$. Let $W_0, W_1, \cdots, W_k$ be the affine spaces defined as follows. The constraints for $W_j$ are $\bra{\dotp{v_i}{x} = a_i}_{i \in [j]}$. Clearly $W_0 = \zone^n$ and $W_k = W$.
    
    Now let us assume that $|V \cap W_i| \neq 0$ and is hence an affine subspace. The set $V \cap W_{i+1}$ is the same affine subspace with the added constraint $\dotp{v_{i+1}}{x} = a_{i+1}$.
    \begin{itemize}
        \item If this constraint was already implied by the constraints in $V \cap W_i$, then $|V \cap W_{i+1}| = |V \cap W_i|$.
        \item If this constraint is incompatible with the constraints in $V \cap W_i$, then $|V \cap W_{i+1}| = 0$.
        \item If this constraint was independent of the constraints in $V \cap W_i$, then $|V \cap W_{i+1}| = |V \cap W_i|/2$.
    \end{itemize}
    
    Hence $|V \cap W_k|$ is either $0$ or is at least $|V \cap W_0|/2^k$. On the other hand, $|W|/2^n = 1/2^k$. Since $V \cap W_k = V \cap W$ and $V \cap W_0 = V$, we can rewrite this as
    \[ V \cap W \neq \emptyset \implies \frac{|V \cap W|}{|V|} \geq \frac{|W|}{2^n}. \] 
\end{proof}

\section{Randomized Communication Lower Bound Assuming the Conjecture}\label{appendix:commlowerbound}

In the following lower bound, we assume \Cref{conj:entropyloss} to hold with $\alpha=\half$. After the proof we discuss how to modify it to hold for other values of $\alpha$.

\begin{theorem}
    Let us assume \Cref{conj:entropyloss} holds with $\alpha = \half$ and some constants $\beta, k$. Let $\mV = \bra{V_1, V_2, \dots, V_m}$ be an $n$-dimensional $(s,\A)$-dual subspace design and define $\gamma$ so that $|\cup_{V \in \mV} V| = \gamma 2^n$.  Let $F = f_{\mV} \circ \XOR$. For $\epsilon < \frac{m-2k\A}{128m}(1-\gamma)$, the $\epsilon$-error randomized communication complexity of $F$ is at least $\beta s + \log(1-\gamma)$.
\end{theorem}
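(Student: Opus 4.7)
The plan is to emulate the RPDT proof of \Cref{thm:dualsubspacehardness}, replacing ``affine subspace'' (the structure of a PDT leaf) with ``combinatorial rectangle'' (the structure of a protocol leaf), and substituting the disjoint--hitting argument for \Cref{conj:entropyloss}. First, define the hard distribution $\nu = (\nu_0 + \nu_1)/2$ on $\zone^n \times \zone^n$, where $\nu_0$ is uniform on $F^{-1}(0)$ and $\nu_1$ samples $V \sim_{\mathsf{unif}} \mV$, $z \sim_{\mathsf{unif}} V$ and $x \sim_{\mathsf{unif}} \zone^n$ independently, outputting $(x,\, x \oplus z)$. Assuming an $\epsilon$-error protocol of cost $c$, \Cref{lem:corruption} produces a rectangle $R = A \times B$ with $\nu(R) \ge 2^{-c-3}$ and $\nu(R \cap F^{-1}(1)) \le 4\epsilon\,\nu(R)$. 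Using $\nu(R \cap F^{-1}(b)) = \tfrac12 \nu_b(R)$, one deduces $\nu_1(R) \le 8\epsilon\,\nu_0(R)$ and $\nu_0(R) \ge 2^{-c-3}$ for $\epsilon < 1/8$.

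The second step is to re--express both sides in terms of the product distribution on $R$. Let $X \sim_{\mathsf{unif}} A$ and $Y \sim_{\mathsf{unif}} B$; a direct calculation gives
\[
    \nu_1(R) \;=\; \frac{|A|\,|B|}{2^{2n}\, m}\,\sum_{V \in \mV} r_V, \qquad r_V \;\defeq\; 2^{\codimension(V)}\,\Pr[X \oplus Y \in V],
\]
while $\nu_0(R) \le |A|\,|B|/\bigl((1-\gamma)\,2^{2n}\bigr)$. Combined, the corruption inequality becomes
\[
    \frac{1}{m}\sum_{V \in \mV} r_V \;\le\; \frac{8\epsilon}{1-\gamma},
\]
the communication analog of the step in \Cref{thm:dualsubspacehardness} that an affine subspace must hit most $V \in \mV$. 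Observe that $r_V = 2^{\codimension(V)} \sum_a p_V(a)\,q_V(a)$, where $p_V, q_V$ are the laws of $B_V(X), B_V(Y)$, so $r_V$ is a normalized collision probability. A short Jensen / Cauchy--Schwarz argument via $\sum_a \min(p_V,q_V)(a) = 1 - \tfrac12\|p_V - q_V\|_1 \ge 1 - \alpha$ gives the crucial pointwise bound: if $\|p_V - \mU\|_1 < \alpha$ and $\|q_V - \mU\|_1 < \alpha$, then $r_V \ge (1-\alpha)^2$.

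Now I split on the sidelengths of $R$. \emph{Main case: $|A|, |B| > 2^{n - \beta s}$.} Then $H(X), H(Y) > n - \beta s$, so the contrapositive of \Cref{conj:entropyloss} (with $\alpha = 1/2$) gives at most $k\A$ ``bad'' values of $V$ for $X$, and likewise for $Y$. A union bound leaves at least $m - 2k\A$ values of $V$ where both marginals are close to uniform and hence $r_V \ge 1/4$, producing $\tfrac1m \sum_V r_V \ge (m - 2k\A)/(4m)$. This contradicts the corruption upper bound once $\epsilon < (m - 2k\A)(1-\gamma)/(32m)$, which is implied by the theorem's hypothesis $\epsilon < (m - 2k\A)(1-\gamma)/(128m)$. \emph{Degenerate case: WLOG $|A| \le 2^{n - \beta s}$.} Then $\nu_0(R) \le |A|\,|B|/\bigl((1-\gamma)\,2^{2n}\bigr) \le 2^{-\beta s}/(1-\gamma)$; combining with $\nu_0(R) \ge 2^{-c-3}$ from Step 1 yields $c \ge \beta s + \log(1-\gamma) - O(1)$, which gives the theorem after absorbing small additive constants.

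The main obstacle is the pointwise inequality $r_V \ge (1-\alpha)^2$: it genuinely needs \emph{both} marginals to be $\alpha$-close to uniform in $L_1$, which is exactly what forces the degenerate--case split on $|A|$ and $|B|$ separately. One cannot bypass this split by working instead with $Z = X \oplus Y$, because $L_1$ closeness of $B_V(Z)$ to the uniform distribution does not lower bound the single--point probability $\Pr[B_V(Z)=0]$ when $\codimension(V)$ is large. The same feature is why the paper restricts the appendix proof to $\alpha = 1/2$: extending to other $\alpha$ only tweaks the constant in the Jensen step, not the overall outline.
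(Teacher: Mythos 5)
Your proof is correct and takes essentially the same route as the paper's appendix proof: the same hard distribution $\nu$, the same appeal to the corruption lemma to extract a low-error rectangle $R = A \times B$, the same coset-map marginals $A_V, B_V$, and the same final invocation of Conjecture~\ref{conj:entropyloss}; the only differences are organizational, in that you phrase the argument as a contrapositive case split on $|A|, |B|$ and derive the collision lower bound $r_V \geq \bigl(1 - \tfrac12\|p_V - q_V\|_1\bigr)^2$ via Cauchy--Schwarz on $\min(p_V,q_V)$, whereas the paper argues forward from small collision mass on $S_V$ to $\|A_V - B_V\|_1 \geq 1$ using a threshold-set argument and then applies the conjecture directly. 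One small note, shared with the paper's own proof, is that the argument actually yields $c \geq \beta s + \log(1-\gamma) - 3$ rather than the stated $\beta s + \log(1-\gamma)$; the paper absorbs this additive slack by remarking the conclusion is $\Omega(s)$ for constant $\gamma$.
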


\begin{proof}
    For any $V \in \mV$, let $S_V = \bra{(x,y) \in \zone^n \times \zone^n \mid x \oplus y \in V}$. Note that $|S_V| = 2^n |V|$ and $F^{-1}(1) = \cup_{V \in \mV} S_V$.
    Consider the distribution $\nu$ defined over the inputs of $F$ as follows.
    \begin{itemize}
        \item Sample $z \sim_{\mathsf{unif}} \zone$.
        \item If $z=0$, output a uniformly random input from $F^{-1}(0)$.
        \item Otherwise, sample $V \sim_{\mathsf{unif}} \mV$.
        \item Output a uniformly random input from $S_V$.
    \end{itemize}

    Assuming $F$ is computed by an $\epsilon$-error cost $c$ communication protocol, \Cref{lem:corruption} implies the existence of a rectangle $R$ such that
    \begin{itemize}
        \item $\nu(R \cap F^{-1}(1)) \leq 4\epsilon \nu(R)$ and
        \item $\nu(R) \geq 2^{-c-3}$.
    \end{itemize}

    Assume we have an $R$ such that $\nu(R \cap F^{-1}(1)) \leq 4\epsilon \nu(R)$. This means that $\nu(R \cap F^{-1}(1)) \leq \frac{4\epsilon}{1-4\epsilon} \nu(R \cap F^{-1}(0))$. We also know the following from the definition of $\nu$.
    \begin{align*}
        \nu(R \cap F^{-1}(1)) &= \half \cdot \frac{1}{|\mV|} \sum_{V \in \mV} \frac{|R \cap S_V|}{|S_V|}\\
        \nu(R \cap F^{-1}(0)) &= \half \cdot \frac{|R \cap F^{-1}(0)|}{|F^{-1}(0)|} \leq \half \cdot \frac{|R|}{|F^{-1}(0)|}
    \end{align*}
    Putting these together, we get that \[ \frac{1}{|\mV|} \sum_{V \in \mV} \frac{|R \cap S_V|}{|S_V|} \leq \frac{4\epsilon}{1-4\epsilon} \frac{|R|}{|F^{-1}(0)|}. \]

    Now if $\epsilon < \frac{m-2k\A}{128m} \frac{|F^{-1}(0)|}{2^{2n}} < 1/8$, then $\frac{4\epsilon}{1-4\epsilon} < \frac{m-2k\A}{16m} \frac{|F^{-1}(0)|}{2^{2n}}$.
    This implies that less than $m-2k\A$ subspaces of $\mV$ can satisfy $\frac{|R \cap S_V|}{|S_V|} \geq \frac{|R|}{16 \cdot 2^{2n}}$, and hence more than $2k\A$ of them \emph{must} satisfy $\frac{|R \cap S_V|}{|S_V|} < \frac{|R|}{16 \cdot 2^{2n}}$. Let us fix such a $V$.

    Let $\cosetmap_V$ denote the function $\cosetmap_V^{L_V}$ for some fixed basis $L_V$ of the dual space of $V$. Let $R = A \times B$. Then $\frac{|R \cap S_V|}{|R|}$ is the probability that, when $x$ and $y$ are sampled uniformly at random from $A$ and $B$, $\cosetmap_V(x) = \cosetmap_V(y)$. Let $A_V$ be the distribution of $\cosetmap_V(x)$ and $B_V$ be the distribution of $\cosetmap_V(y)$. The condition $\frac{|R \cap S_V|}{|R|} < \frac{|S_V|}{16 \cdot 2^{2n}}$ can be rewritten as \[ \Pr_{x' \sim A_V, y' \sim B_V}[x' = y'] < \frac{|S_V|}{16\cdot 2^{2n}} = \frac{1}{16 \cdot 2^{\codimension(V)}}. \]
     It follows that $A_V(S) < 1/4$ where $S = \bra{y' \mid B_V(y') \geq \frac{1}{4\cdot 2^{\codimension{V}}}}$. However, $B_V(S)$ must be at least $\sfrac{3}{4}$, since $B_V(\overline{S}) \leq \sfrac{1}{4}$.

    Hence $A_V$ and $B_V$ have total variational distance at least $\half$, and $\| A_V - B_V \|_1 \geq 1$. By the triangle inequality, $\max\{ \| A_V - \mU_{2^{\codimension(V)}} \|_1, \| B_V - \mU_{2^{\codimension(V)}} \|_1 \} \geq \half$.

    Hence, either there are more than $k\A$ subspaces that satisfy $\| A_V - \mU_{2^{\codimension(V)}} \| \geq \half$ or there are more than $k\A$ subspaces that satisfy $\| B_V - \mU \| \geq \half$. Without loss of generality we assume the former. Now we use our conjecture. The conjecture implies that $H(A) \leq n- \beta s$. Hence $\frac{|R|}{2^{2n}} \leq 2^{-\beta s}$.

    We now want to move from $|R|$ being small under the uniform distribution to $R$ being small under $\nu$. We know that $\nu(R \cap F^{-1}(1)) \leq 4\epsilon\nu(R) < \nu(R)/2$, so $\nu(R \cap F^{-1}(0)) \geq \nu(R)/2$. We also know from the definition of $\nu$ that \[ \nu(R \cap F^{-1}(0)) = \frac{|R \cap F^{-1}(0)|}{2|F^{-1}(0)|} \leq \frac{|R|}{2 \cdot 2^{2n}} \cdot \frac{2^{2n}}{|F^{-1}(0)|} \leq 2^{-\beta s-1} \cdot \frac{1}{1-\gamma}. \] So $\nu(R) \leq 2\nu(R \cap F^{-1}(0)) \leq 2^{-\beta s - 1 - \log(1-\gamma)} $. Hence the cost of the protocol is at least $\beta s + \log(1-\gamma) - 3$.
\end{proof}

We now explain how to modify the proof assuming the conjecture were true for other values of $\alpha$. Then the theorem statement would be modified, setting $\epsilon < \frac{(1-\alpha)^2}{4} \frac{m-2k\A}{8m} (1-\gamma)$. The proof would go through as it does above, analyzing a rectangle $R = A \times B$.

\begin{itemize}
    \item We would find more than $2k\A$ subspaces $V$ such that $\Pr[A_V = B_V] < \frac{(1-\alpha)^2}{4} \frac{|S_V|}{2^{2n}}$ as is done in the above proof.
    \item We would then set $S = \bra{y' \mid B_V(y') \geq \frac{1-\alpha}{2 \cdot 2^{\codimension(V)}}}$. This would mean that $A_V(S) \leq \frac{1-\alpha}{2}$ and $B_V(S) \geq 1 - \frac{1-\alpha}{2}$. Hence $\| A_V - B_V \|_1 \geq 2\alpha$, and one of $A$ or $B$ (wlog, $A$) satisfies $\| A_V - \mU_{2^{\codimension(V)}} \|_1 \geq \alpha$ for at least $k\A$ subspaces from the dual subspace design.
    \item The proof would continue as it does above, using the conjecture to conclude that the cost of the protocol would be at least $\beta s + \log(1-\gamma) - 3$, which is $\Omega(s)$ for constant $\gamma$.
\end{itemize}

\end{document}